\newcommand{\myaddress}[1]{
\par {\raggedright #1
\vspace{1.4em}
\noindent\par}
}
\theoremstyle{plain}
\newtheorem{thm}{Theorem}
\newtheorem{lem}[thm]{Lemma}
\newtheorem{prop}[thm]{Proposition}
\begin{document}

\title{Efficiency of quantum state tomography for qubits}

\author{Koichi Yamagata}

\maketitle

\myaddress{\begin{center}
Department of Mathematics, Osaka University \\
Toyonaka, Osaka 560-0043, Japan
\par\end{center}}

\def\H{\mathcal{H}}
\def\K{\mathcal{K}}
\def\M{\mathcal{M}}
\def\B{\mathcal{B}}
\def\L{\mathcal{L}}
\def\E{\mathcal{E}}
\def\X{\mathfrak{X}}
\def\P{\mathcal{P}}
\def\F{\mathcal{F}}
\def\T{\mathcal{T}}
\def\N{\mathbb{N}}
\def\R{\mathbb{R}}
\def\C{\mathbb{C}}
\def\braket#1#2{\langle#1|#2\rangle}
\def\inner#1#2{\left\langle #1,\,#2\right\rangle }
\def\bra#1{\langle#1|}
\def\ket#1{|#1\rangle}
\def\S{\mathcal{S}}
\def\Tr{{\rm Tr}\,}
\def\rank{{\rm rank}\,}
\def\ipovm{\widetilde{\M}(\H)}
\def\realmatrix{M^{+}(d,\mathbb{R})}
\def\supp#1{|{\rm supp}(#1)|}
\def\argmax{\mathop{\rm argmax}}
\def\argmin{\mathop{\rm argmin}}
\def\Ker{{\rm Ker}\,}
\def\Re{{\rm Re}\,}
\def\cov{{\rm cov}}
\def\spann{\rm{span}}
\def\v{\boldsymbol{v}}

\begin{abstract}
The efficiency of quantum state tomography is discussed from the point of view of 
quantum parameter estimation theory, in which the trace of the weighted covariance is to be minimized. 
It is shown that tomography is optimal only when a special weight is adopted. 
\end{abstract}

\section{Introduction}

Let $\L(\H)$ be the set of linear operators on a Hilbert space $\H=\C^{2}$,
and let 
$\S := \{\tau_x \mid x = (x^{1},x^{2},x^{3}) \in \X \}$ 
be the set of strictly positive density operators on $\H$ parametrized by the Stokes parameters 
$x \in \X := \{x \in \R^{3} \mid (x^{1})^{2}+(x^{2})^{2}+(x^{3})^{2}<1\}$ as
\begin{equation}
\tau_x:=\frac{1}{2} (I+x^{1}\sigma_{1}+x^{2}\sigma_{2}+x^{3}\sigma_{3}) \label{eq:tomoModel},
\end{equation}
where 
$\sigma_1, \sigma_2, \sigma_3$ are the Pauli matrices. 
Suppose we have an unknown quantum state $\tau=\tau_x \in \S$.
We are interested in identifying the true value of the parameter $x$. 

Let
\[
\M^{(s)}(\H):=\{(M_{1},M_{2},\dots,M_{s}) \mid M_{i}\in\L(\H),\, M_{i}\geq0,\,\sum_{i=1}^{s}M_i=I\}
\]
be the set of positive operator-valued measures (POVMs) on $\H$ taking values on a finite set of outcomes labeled by $\{1,2, \dots ,s  \}$,
and let $\M(\H)=\bigcup_{s=1}^{\infty}\M^{(s)}(\H)$. 
Given POVMs $M=(M_{1},M_{2}, \dots , M_{s_1})$, $N=(N_{1},N_{2},\dots,N_{s_2})$, and a real number $p$ between $0$ and $1$, 
we can generate a new POVM by a randomized combination of them as follows:
$$ pM \oplus (1-p)N := (pM_{1}, \dots , pM_{s_1},(1-p)N_{1}, \dots , (1-p)N_{s_2}) \in \M(\H). $$
We can repeat this randomization procedure inductively to obtain  
$ \bigoplus_{i=1}^{k} p_{i}M^{(i)} \in \M(\H)$, where 
$M^{(1)},M^{(2)}, \dots , M^{(k)} \in \M(\H)$ and $p_{i} \geq 0 \, (1 \leq i \leq k)$
such that $ \sum_{i=1}^{k} p_{i} = 1$. 
We shall call $\bigoplus_{i=1}^{k}p_{i}M^{(i)}$ a {\em random measurement}
when $M^{(1)},M^{(2)},\dots,M^{(k)}\in\M(\H)$ are all projection-valued measurements (PVMs). 
Applying a random measurement means applying one of the projection-valued measurement $\{M^{(i)}\}_{1 \leq i \leq k}$ chosen at random according to the probability distribution $p=(p_i)_{1 \leq i \leq k}$\footnote{
Helstrom \cite{cramer} defined a random measurement based on a different type of convex structure of $\M(\H)$ as $(p M_{1}+(1-p) N_{1},\dots,p M_{s}+(1-p) N_{s})$.
Our definition of random measurement is seemingly different from his.}.

Let $M^{(1)},M^{(2)},M^{(3)}$
be projection-valued measurements given by the spectral decomposition of the observables $\sigma_{1},\sigma_{2},\sigma_{3}$, respectively, 
and let $M^{(T)}:=\frac{1}{3}(M^{(1)}\oplus M^{(2)}\oplus M^{(3)})$ be their random measurement according to the uniform distribution. 
Suppose that, among $m$ applications of $M^{(T)}$ to the unknown state $\tau_x$, the $\mu$th PVM $M^{(\mu)}$ has been chosen $m_{\mu}$ times 
and the outcomes $ \pm 1$ have been observed $m_{\mu}^{\pm}$ times, 
where $m=m_1 + m_2 + m_3$ and $m_{\mu}= m_{\mu}^{+} + m_{\mu}^{-}$ for $\mu \in \{1,2,3\}$. 
We can construct an unbiased estimator for the Stokes parameters $x=(x^1,x^2,x^3)$ as
\begin{equation}
\hat{x}^{\mu}  :=  \frac{m_{\mu}^{+}-m_{\mu}^{-}}{m_{\mu}}, \qquad \mu \in \{1,2,3\} \label{eq:tomo}.
\end{equation}
We shall call this estimator a {\em tomography} in this paper.
Note that the tomography can be regarded as a maximum likelihood estimator. 
In fact, since the probability distribution for the outcomes $\pm 1$ of the $\mu$th PVM 
\begin{equation}
M^{(\mu)}  =  \left(\frac{1}{2}(I+\sigma_{\mu}),\,\frac{1}{2}(I-\sigma_{\mu})\right),  \label{eq:pvmTomo}
\end{equation}
applied to the state $\tau_x \in \S$ is given by
$p_{\tau_x}^{M^{(\mu)}}=(\frac{1+x^{\mu}}{2},\frac{1-x^{\mu}}{2})$,
the probability distribution for the outcome of the tomography $M^{(T)}$ is
\begin{equation}
p_{\tau_x}^{M^{(T)}}=\frac{1}{6}(1+x^{1},1-x^{1},1+x^{2},1-x^{2},1+x^{3},1-x^{3}). \label{eq:pM}
\end{equation}
As a consequence, 
the likelihood function for the outcomes $(m_{\mu}^{\pm})_{1 \leq \mu \leq 3}$ obtained by $m$ applications of $M^{(T)}$ is 
$$l_{m}(x) = \sum_{\mu=1}^{3}\left(m_{\mu}^{+}\log\frac{1+x^{\mu}}{6}+m_{\mu}^{-}\log\frac{1-x^{\mu}}{6}\right),$$
and it is easy to see that $\frac{\partial}{\partial x^{\mu}}l_{m}=0$ is equivalent\footnote{
There are possibilities that $\hat{x} \notin \X$. 
However it follows from the law of large numbers of the tomography 
that $\hat{x} \in \X$ for sufficiently large $m$ almost surely. } to \eqref{eq:tomo}.

In order to investigate the optimality of the tomography,
let us recall some basic facts from quantum parameter estimation theory.
Let $\{\rho_{\theta} \mid \theta=(\theta^1,\dots,\theta^d)\in\Theta\}$ be a smooth parametric family of density operators
on a Hilbert space $\H$ with parameter space $\Theta \subset \R^d$. 
An estimator is represented by a pair $(M,\hat{\theta})$ of a POVM $M \in \M(\H)$ and a map $\hat{\theta}:\N \rightarrow \Theta$ 
that gives the estimated value $\hat{\theta}(n)$ from each observed data $n\in\N$.
An estimator $(M,\hat{\theta})$ is called {\em unbiased} if
\begin{equation}
E_{\theta}[M,\hat{\theta}]:=\sum_{n\in\N} \hat{\theta}(n)\, \Tr \rho_{\theta} M_n=\theta \label{eq:unbias}
\end{equation}
is satisfied for all $\theta \in \Theta$. 
An estimator $(M,\hat{\theta})$ is called {\em locally unbiased} \cite{unbias} at a given point $\theta_0 \in \Theta$ 
if the condition \eqref{eq:unbias} is satisfied around $\theta=\theta_0$ up to the first order of the Taylor expansion.
It is well known that an estimator $(M,\hat{\theta})$ that is locally unbiased at $\theta_0$ satisfies 
the following series of inequalities
\cite{unbias,cramer}:
\begin{equation}
V_{\theta_0}[M,\hat{\theta}] \geq (g_{\theta_0}(M))^{-1} \geq (J_{\theta_0})^{-1}, \label{eq:cramer}
\end{equation}
where $V_{\theta}[\cdot ]$ denotes the covariance matrix, 
and $g_{\theta}(M)$ is the classical Fisher information matrix at $\theta$ with respect to $M\in\M(\H)$ defined by
\[
	g_{\theta}(M):=
	\left[
	\sum_{n}\frac{(\frac{\partial}{\partial \theta^i}\Tr\rho_{\theta}M_{n})(\frac{\partial}{\partial \theta^j}\Tr\rho_{\theta}M_{n})}{\Tr\rho_{\theta}M_{n}}
	\right]
	_{1 \leq i,j \leq d}.
\]
Further, $J_{\theta}$ is the quantum Fisher information matrix at $\theta$ 
given by
$$
J_{\theta}:= \left[ \Tr (\frac{\partial}{\partial \theta^i} \rho_{\theta}) L_j  \right]_{1 \leq i,j \leq d}
=\left[ \frac{1}{2} \Tr \rho_{\theta}( L_i L_j + L_j L_i)  \right]_{1 \leq i,j \leq d},
$$
where $L_i$ is the $i$th symmetric logarithmic derivative (SLD) defined by the selfadjoint operator
satisfying the equation
\begin{equation}
\frac{\partial}{\partial \theta^i}\rho_{\theta}=\frac{1}{2}(L_{i}\rho_{\theta}+\rho_{\theta}L_{i}).\label{eq:SLD}
\end{equation}
The inequality $V_{\theta_0}[M,\hat{\theta}] \geq (J_{\theta_0})^{-1}$ is called the {\em quantum Cram\'er-Rao inequality}. 
The first inequality in \eqref{eq:cramer} is saturated 
when $\hat{\theta}^{i}(n)=\theta^i + \sum_j (g_{\theta}(M)^{-1})^{ij} \frac{\partial}{\partial \theta^j} (\log \Tr \rho_{\theta}M_n)$ 
is adopted. 
However the second inequality in \eqref{eq:cramer} cannot be saturated in general 
because of the non-commutativity of the SLDs.
To avoid this difficulty, 
we often adopt an alternative strategy to seek the estimator which minimizes $\Tr H_{\theta_0} V_{\theta_0}[M,\hat{\theta}]$, 
where $H_{\theta}$ is a given $d \times d$ real positive definite matrix for each $\theta$ called a {\em weight} \cite{unbias,cramer}. 
Thus the problem of finding the optimal estimator boils down to the problem of finding $M\in\M(\H)$ 
which minimizes $\Tr H_{\theta_0} g_{\theta_0}(M)^{-1}$.

It is known that 
when $\dim \H=2$, there is a definitive answer to the optimality of estimators, which is summarized in the following Propositions.

\begin{prop}\label{prop:qbitest}
For a given weight $H_{\theta}$, 
\begin{equation}
\min\left\{ \Tr H_{\theta}g_{\theta}(M)^{-1} \mid M\in\M(\H)\right\} =\left(\Tr R_{\theta}\right)^{2} \label{eq:qbitBound},
\end{equation}
where $R_{\theta}:=\sqrt{\sqrt{J_{\theta}^{-1}}H_{\theta}\,\sqrt{J_{\theta}^{-1}}}$.
The minimum is attained   
if and only if $M\in\M(\H)$ satisfies
\begin{equation}
g_{\theta}(M)=\frac{\sqrt{J_{\theta}}R_{\theta}\sqrt{J_{\theta}}}{\Tr R_{\theta}}.\label{eq:bestFisher}
\end{equation} 
\end{prop}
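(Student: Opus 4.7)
The plan is to reduce \eqref{eq:qbitBound} to a scalar optimization via a linear change of variables, exploit a qubit-specific trace constraint on the feasible Fisher matrices, and close with Cauchy--Schwarz in the Hilbert--Schmidt inner product. First set $\tilde g := \sqrt{J_\theta^{-1}}\,g_\theta(M)\,\sqrt{J_\theta^{-1}}$. Cyclicity of the trace gives
\[
\Tr H_\theta\,g_\theta(M)^{-1}=\Tr R_\theta^{2}\tilde g^{-1}\qquad\text{and}\qquad \Tr J_\theta^{-1}g_\theta(M)=\Tr\tilde g,
\]
so the problem becomes that of minimizing $\Tr R_\theta^{2}\tilde g^{-1}$ over those $\tilde g$ realizable as $\sqrt{J_\theta^{-1}}g_\theta(M)\sqrt{J_\theta^{-1}}$ for some POVM $M$.

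The crux is the Gill--Massar/Nagaoka trace inequality for qubits: $\Tr J_\theta^{-1}g_\theta(M)\le1$ for every $M\in\M(\H)$. I would prove this in two moves. \emph{(i)} The joint convexity of $(v,\vec u)\mapsto\vec u\vec u^{T}/v$ shows that refining a POVM into rank-one elements only enlarges $g_\theta(M)$ in the Loewner order, hence only enlarges $\Tr J_\theta^{-1}g_\theta(M)$; it therefore suffices to treat rank-one POVMs. \emph{(ii)} For a rank-one POVM $M_n=\alpha_n(I+\hat n_n\cdot\vec\sigma)/2$ the completeness relation forces $\sum_n\alpha_n=2$ and $\sum_n\alpha_n\hat n_n=0$. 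Substituting into $p_n=\alpha_n(1+\hat n_n\cdot\vec x)/2$ and into $g_\theta(M)_{ij}=\sum_n(\partial_ip_n)(\partial_jp_n)/p_n$, and using $J_\theta^{-1}=I-\vec x\vec x^{T}$ (standard in the Stokes parametrization), the factor $1-(\hat n_n\cdot\vec x)^{2}=\hat n_n^{T}J_\theta^{-1}\hat n_n$ cancels $1+\hat n_n\cdot\vec x$ in the denominator of $p_n$ to leave $\tfrac12\sum_n\alpha_n(1-\hat n_n\cdot\vec x)=1$. Thus rank-one POVMs saturate the bound and all POVMs lie below.

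With $\Tr\tilde g\le1$ in hand, Cauchy--Schwarz in the Hilbert--Schmidt inner product applied to $A=\tilde g^{1/2}$ and $B=\tilde g^{-1/2}R_\theta$ gives
\[
(\Tr R_\theta)^{2}=|\Tr A^{*}B|^{2}\le\Tr A^{*}A\cdot\Tr B^{*}B=\Tr\tilde g\cdot\Tr R_\theta^{2}\tilde g^{-1}\le\Tr R_\theta^{2}\tilde g^{-1},
\]
which is \eqref{eq:qbitBound}. Equality in Cauchy--Schwarz forces $A\propto B$, i.e. $\tilde g\propto R_\theta$, and $\Tr\tilde g=1$ fixes the constant, giving $\tilde g=R_\theta/\Tr R_\theta$; transforming back yields exactly \eqref{eq:bestFisher}. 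For the other direction (achievability), I would spectrally decompose $R_\theta/\Tr R_\theta$ into rank-one trace-one components and realize each component as a projective Bloch-direction measurement (choosing $\hat n$ so that $\sqrt{J_\theta^{-1}}\hat n$ aligns with the given eigenvector, and noting that each such PVM produces a rank-one trace-one $\tilde g$), then mix these PVMs with the corresponding convex weights into a random measurement.

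The principal obstacle is the qubit trace inequality itself: it is the step where $\dim\H=2$ is essential. The weaker Loewner bound $g_\theta(M)\le J_\theta$ valid in all dimensions only gives $\Tr H_\theta g_\theta(M)^{-1}\ge\Tr R_\theta^{2}$, which falls strictly short of $(\Tr R_\theta)^{2}$ unless $R_\theta$ is a scalar multiple of the identity. All steps downstream of the trace bound are standard linear-algebraic manipulations.
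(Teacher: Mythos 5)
Your proposal is correct, and while it shares the paper's overall architecture (normalize the Fisher matrix by $\sqrt{J_{\theta}^{-1}}$, exploit a trace constraint, and achieve the bound by randomizing rank-one PVMs), it substitutes genuinely different arguments at the two key technical steps. For the trace inequality $\Tr J_{\theta}^{-1}g_{\theta}(M)\leq 1$, the paper works in the operator Hilbert space with inner product $\inner{A}{B}_{\theta}=\frac{1}{2}\Tr\rho_{\theta}(A^{*}B+BA^{*})$, constructs an orthonormal family $\{\hat{L}^{i}\}\cup\{I\}$, and applies Bessel's inequality (Lemma \ref{lem:Trg}); you instead reduce to rank-one POVMs via joint convexity of $(v,\vec u)\mapsto \vec u\vec u^{T}/v$ and verify the bound by an explicit Bloch-sphere computation using $J_{x}^{-1}=I-\ket{x}\bra{x}$. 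Your route is more elementary and makes the saturation by rank-one POVMs transparent, but it is tied to the three-parameter Stokes model, whereas the Bessel argument gives $\Tr\hat{g}_{\theta}(M)\leq\dim\H-1$ for any parametric family and any finite-dimensional $\H$ in one stroke (and covers the $d=1,2$ submodel cases of the Proposition without an extra reduction, which your write-up would need to supply). For the optimization, the paper first characterizes the feasible set exactly, $\hat{g}_{\theta}(\M(\H))=\{G\geq 0\mid\Tr G\leq 1\}$ (Lemmas \ref{lem:vv}--\ref{lem:fisherset}), and then runs a Lagrange-multiplier computation (Lemma \ref{lem:Lagrange}); your Cauchy--Schwarz argument $(\Tr R_{\theta})^{2}\leq\Tr\tilde g\cdot\Tr R_{\theta}^{2}\tilde g^{-1}$ gets the lower bound and the uniqueness of the optimal $\tilde g$ from the necessary condition $\Tr\tilde g\leq 1$ alone, which is cleaner and avoids characterizing the full feasible set, at the cost of having to argue achievability separately (which you do, correctly, by realizing each spectral component of $R_{\theta}/\Tr R_{\theta}$ as a PVM and mixing). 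One small slip in your closing remark: $(\Tr R_{\theta})^{2}=\Tr R_{\theta}^{2}$ holds for $R_{\theta}\geq 0$ precisely when $R_{\theta}$ has rank at most one, not when $R_{\theta}$ is a multiple of the identity; this does not affect the proof.
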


Proposition \ref{prop:qbitest} was first proved by Nagaoka \cite{nagaoka2para} (cf. \cite{fujiwara2para}) when $d=2$.
The case $d=3$ is proved by Hayashi \cite{hayashi3para}, and independently by Gill and Massar \cite{gill}.
Further, Nagaoka constructed explicitly a measurement which attains the minimum when $d=2$.
His construction of an optimal estimator can be generalized as follows.

\begin{prop}\label{prop:qbitEstM}
Given a weight $H_{\theta}$, 
let us diagonalize $R_{\theta}$ as $R_{\theta}=U S U^{-1}$
where $S=diag(S_{1},\dots,S_{d})$ is a diagonal matrix and $U\in O(d)$,
and let $M^{(i)}$ be a projection-valued measurement given by the spectral decomposition of the operator
\begin{equation}
\hat{L}^{i}:=\sum_{k=1}^{d}K^{ik}L_{k}, \label{eq:Lhat}
\end{equation}
where $K^{ik} := (U^{-1}\sqrt{J_{\theta}^{-1}})^{ik}$.
Then the random measurement
\begin{equation}
M := p_{1}M^{(1)}\oplus\dots\oplus p_{d}M^{(d)} \label{eq:optMeasurement}
\end{equation}
satisfies \eqref{eq:bestFisher}, 
where  $p_{i} := S_{i}/(S_{1}+\dots+S_{d})$. 
\end{prop}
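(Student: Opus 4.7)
The plan is to verify \eqref{eq:bestFisher} by direct computation, using two ingredients: (i) the additivity of classical Fisher information under random measurements,
\begin{equation*}
g_\theta\Bigl(\bigoplus_{i=1}^d p_i M^{(i)}\Bigr) = \sum_{i=1}^d p_i\, g_\theta(M^{(i)}),
\end{equation*}
which is immediate from the definition of $g_\theta$ by grouping outcomes according to the index $i$; and (ii) a qubit-specific expression for $g_\theta(M^{(i)})$ when $M^{(i)}$ is a PVM coming from the spectral decomposition of a self-adjoint operator on $\H = \C^2$.

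For (ii), I would write $\hat L^{(i)} = \lambda_+^{(i)} P_+^{(i)} + \lambda_-^{(i)} P_-^{(i)}$ with rank-one orthogonal projections $P_\pm^{(i)}$ and set $p_\pm^{(i)} := \Tr \rho_\theta P_\pm^{(i)}$. Using $P_+^{(i)} + P_-^{(i)} = I$ together with $\Tr \rho_\theta L_j = 0$ (a consequence of \eqref{eq:SLD}), one checks that $g_\theta(M^{(i)})$ is rank one:
\begin{equation*}
g_\theta(M^{(i)})_{jl} = \frac{a_j^{(i)} a_l^{(i)}}{p_+^{(i)} p_-^{(i)}}, \qquad a_j^{(i)} := \Re \Tr \rho_\theta L_j P_+^{(i)}.
\end{equation*}
The definition $K = U^{-1}\sqrt{J_\theta^{-1}}$ with $U \in O(d)$ gives $K J_\theta K^T = I$, hence $\Tr \rho_\theta \hat L^{(i)} = 0$ and $\Tr \rho_\theta (\hat L^{(i)})^2 = 1$. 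Substituting $P_+^{(i)} = (\hat L^{(i)} - \lambda_-^{(i)} I)/(\lambda_+^{(i)} - \lambda_-^{(i)})$ and using $\Re \Tr \rho_\theta L_j L_k = (J_\theta)_{jk}$ yields $a_j^{(i)} = (K J_\theta)_{ij}/(\lambda_+^{(i)} - \lambda_-^{(i)})$, while the two normalization conditions force $\lambda_+^{(i)} \lambda_-^{(i)} = -1$ and $p_+^{(i)} p_-^{(i)} = 1/(\lambda_+^{(i)} - \lambda_-^{(i)})^2$. The eigenvalue factors cancel, leaving $g_\theta(M^{(i)})_{jl} = (K J_\theta)_{ij}(K J_\theta)_{il}$.

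Summing via (i) and using $K J_\theta = U^T \sqrt{J_\theta}$, this yields
\begin{equation*}
g_\theta(M) = \sqrt{J_\theta}\, U\, \mathrm{diag}(p_1,\ldots,p_d)\, U^T \sqrt{J_\theta}.
\end{equation*}
Because $U^T = U^{-1}$ and $p_i = S_i/\Tr R_\theta$, the middle factor equals $U S U^{-1}/\Tr R_\theta = R_\theta/\Tr R_\theta$, reproducing \eqref{eq:bestFisher}. The main obstacle is the rank-one reduction of $g_\theta(M^{(i)})$: it is genuinely two-dimensional, relying on the fact that a nondegenerate observable on $\C^2$ has exactly two rank-one eigenprojections, and the bookkeeping of real-part conventions in relating $\Tr \rho_\theta L_j L_k$ to $(J_\theta)_{jk}$ and in evaluating $a_j^{(i)}$ needs to be handled carefully.
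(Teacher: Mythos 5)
Your proof is correct, and the overall skeleton matches the paper's: both arguments combine the affinity $g_\theta(\bigoplus_i p_i M^{(i)})=\sum_i p_i\,g_\theta(M^{(i)})$ with the fact that each constituent PVM contributes a rank-one Fisher information matrix, and then reassemble $\sqrt{J_\theta}\,U\,\mathrm{diag}(p_1,\dots,p_d)\,U^{-1}\sqrt{J_\theta}=\sqrt{J_\theta}R_\theta\sqrt{J_\theta}/\Tr R_\theta$. Where you genuinely diverge is in how the rank-one step is established. The paper works with the normalized matrix $\hat g_\theta(M)=U^{-1}\sqrt{J_\theta^{-1}}\,g_\theta(M)\,\sqrt{J_\theta^{-1}}U$ and proves $\hat g_\theta(M^{(\v)})=\ket{\v}\bra{\v}$ by a squeeze: the $(\v,\v)$ diagonal entry equals $1$ because $L_\v$ lies in the span of the (orthonormalized) spectral projections, so Bessel's inequality is saturated, while $\Tr\hat g_\theta(M^{(\v)})\le\dim\H-1=1$ by a separate Bessel/Cauchy--Schwarz lemma; positive semidefiniteness then pins down the whole matrix. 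You instead compute $g_\theta(M^{(i)})$ head-on from the two-outcome structure of a qubit PVM, using $P_++P_-=I$, the normalizations $\Tr\rho_\theta\hat L^{(i)}=0$ and $\Tr\rho_\theta(\hat L^{(i)})^2=1$ forced by $KJ_\theta K^T=I$, and explicit eigenvalue algebra ($\lambda_+\lambda_-=-1$, $p_+p_-=(\lambda_+-\lambda_-)^{-2}$) so that all spectral data cancels, leaving $g_\theta(M^{(i)})_{jl}=(KJ_\theta)_{ij}(KJ_\theta)_{il}$. Your route is more computational but self-contained and makes the rank-one structure fully explicit; the paper's route is slicker and reuses the trace bound $\Tr\hat g_\theta(M)\le\dim\H-1$, which it needs anyway for Proposition \ref{prop:qbitest}. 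One point worth making explicit in your write-up: nondegeneracy of $\hat L^{(i)}$ (so that the two eigenprojections are indeed rank one) is itself guaranteed by the normalizations, since $\hat L^{(i)}=\lambda I$ would force $\lambda=0$ and $\lambda^2=1$ simultaneously.
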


Note that the optimal measurement \eqref{eq:optMeasurement} depends on the true value of $\theta\in\Theta$ in general.
In such a case,
we necessary invoke an adaptive estimation scheme \cite{strong} to achieve the minimum \eqref{eq:qbitBound}.

Now it is natural to inquire whether the tomography is optimal in view of Propositions \ref{prop:qbitest} and \ref{prop:qbitEstM}.
The answer is given by the following. 
\begin{thm}\label{thm:optTomo}
Tomography is optimal if and only if the weight $H_x$ is proportional to the following special one:
\begin{eqnarray}
	H_{x}^{(T)}:=
	\left(
	\begin{array}{ccc}
	\frac{1}{1-(x^1)^{2}} & -\frac{(x^1) (x^2)}{(1-(x^1)^{2})(1-(x^2)^{2})} & -\frac{(x^3)(x^1)}{(1-(x^3)^{2})(1-(x^1)^{2})}\\
	-\frac{(x^1)(x^2)}{(1-(x^1)^{2})(1-(x^2)^{2})} & \frac{1}{1-(x^2)^{2}} & -\frac{(x^2)(x^3)}{(1-(x^2)^{2})(1-(x^3)^{2})}\\
	-\frac{(x^3)(x^1)}{(1-(x^3)^{2})(1-(x^1)^{2})} & -\frac{(x^2) (x^3)}{(1-(x^2)^{2})(1-(x^3)^{2})} & \frac{1}{1-(x^3)^{2}}
	\end{array}
	\right). \label{eq:ht}
\end{eqnarray}
\end{thm}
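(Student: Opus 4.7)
The plan is to invoke Proposition \ref{prop:qbitest} with $d=3$: tomography is optimal for a weight $H_x$ if and only if $M^{(T)}$ satisfies the characterization \eqref{eq:bestFisher}, that is,
$g_x(M^{(T)}) = \frac{\sqrt{J_x}\,R_x\,\sqrt{J_x}}{\Tr R_x}$. I would therefore compute $g_x(M^{(T)})$ and $J_x$ explicitly in the model \eqref{eq:tomoModel}, solve this equation for $H_x$, and identify the answer with $H_x^{(T)}$.

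From the distribution \eqref{eq:pM}, a direct calculation of the classical Fisher information yields the diagonal matrix
\[
	g_x(M^{(T)}) \;=\; \frac{1}{3}\,\mathrm{diag}\!\left(\frac{1}{1-(x^1)^{2}},\,\frac{1}{1-(x^2)^{2}},\,\frac{1}{1-(x^3)^{2}}\right),
\]
while solving the SLD equation \eqref{eq:SLD} for the Bloch model gives $J_x = I + \frac{x x^{T}}{1-|x|^{2}}$, whence the Sherman--Morrison formula produces the clean expression $J_x^{-1} = I - x x^{T}$.

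For the ``only if'' direction, conjugating \eqref{eq:bestFisher} (with $M=M^{(T)}$) by $\sqrt{J_x^{-1}}$ yields $R_x = (\Tr R_x)\,\sqrt{J_x^{-1}}\,g_x(M^{(T)})\,\sqrt{J_x^{-1}}$; squaring and conjugating back by $\sqrt{J_x}$ then gives $H_x = (\Tr R_x)^{2}\,g_x(M^{(T)})\,J_x^{-1}\,g_x(M^{(T)})$. The pivotal step is an entrywise calculation showing
\[
	9\,g_x(M^{(T)})\,J_x^{-1}\,g_x(M^{(T)}) \;=\; H_x^{(T)},
\]
which forces $H_x\propto H_x^{(T)}$. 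For the converse, set $A := \sqrt{J_x^{-1}}\,g_x(M^{(T)})\,\sqrt{J_x^{-1}}$ (positive definite); the identity above reads $A^{2} = \frac{1}{9}\sqrt{J_x^{-1}}\,H_x^{(T)}\,\sqrt{J_x^{-1}}$, so for $H_x = c\,H_x^{(T)}$ uniqueness of the positive square root gives $R_x = 3\sqrt{c}\,A$. Using $\Tr A = \Tr\bigl(J_x^{-1} g_x(M^{(T)})\bigr) = 1$ (each diagonal entry of $g_x(M^{(T)})\,J_x^{-1}$ equals $\frac{1}{3}$), one reads off $\frac{\sqrt{J_x} R_x \sqrt{J_x}}{\Tr R_x} = g_x(M^{(T)})$, so \eqref{eq:bestFisher} holds and tomography is optimal.

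The only real obstacle is verifying the matrix identity $9\,g_x(M^{(T)})\,J_x^{-1}\,g_x(M^{(T)}) = H_x^{(T)}$. This reduces to two observations: $(J_x^{-1})_{ii} = 1-(x^i)^{2}$ cancels the diagonal factor $\frac{1}{1-(x^i)^{2}}$ in $g_x(M^{(T)})$ to reproduce the diagonal entries of $H_x^{(T)}$, and for $i\neq j$ the off-diagonal $(J_x^{-1})_{ij} = -x^{i}x^{j}$ flanked by the diagonal factors of $g_x(M^{(T)})$ yields exactly the stated off-diagonal entries. No conceptual subtlety is involved beyond this book-keeping.
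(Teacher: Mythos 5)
Your proposal is correct and takes essentially the same route as the paper. The paper packages your conjugate-and-square manipulation of \eqref{eq:bestFisher} into a separate statement (Lemma \ref{lem:gyaku}), which characterizes the unique optimal weight for a given attainable Fisher information $F_x$ as $kF_xJ_x^{-1}F_x$ subject to $\Tr J_x^{-1}F_x=1$, and then performs exactly the computations you carry out inline: $g_x(M^{(T)})$ from \eqref{eq:pM}, $J_x^{-1}=I-\ket{x}\bra{x}$, the normalization $\Tr J_x^{-1}g_x(M^{(T)})=1$, and the identity $9\,g_x(M^{(T)})\,J_x^{-1}\,g_x(M^{(T)})=H_x^{(T)}$.
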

Note that $H_{x}^{(T)}$ is not rotationally symmetric. 
This implies that the optimal weight depends on the choice of the coordinate axes. 
Theorem \ref{thm:optTomo} also implies that 
the tomography is not optimal
for a rotationally symmetric weight that is natural for a physical point of view. 

The paper is organized as follows. 
Theorem \ref{thm:optTomo} is proved in Section \ref{sec:Tomography}, and
the non-optimality of the tomography for a rotationally symmetric weight is discussed 
and numerically demonstrated in Section \ref{sec:Discussions}.  
An extension to the case when $\dim \H \geq 3$ is also discussed there. 
For the reader's convenience, simple proofs of Propositions \ref{prop:qbitest} and \ref{prop:qbitEstM} are given 
in Appendix.


\section{Proof of Theorem \ref{thm:optTomo} \label{sec:Tomography}}
We prove Theorem \ref{thm:optTomo} in a series of Lemmas.

\begin{lem}
Let $L_{\mu}$ be the SLD of $\frac{\partial}{\partial x^{\mu}}$ for $\mu \in \{1,2,3\}$. Then
$$
L_{\mu}  = \sigma_{\mu}-\frac{x^{\mu}}{2\det\tau}(I-\tau).
$$
\end{lem}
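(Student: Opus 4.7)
The plan is to treat the defining equation \eqref{eq:SLD} as a Lyapunov-type equation and solve it with a short ansatz, exploiting special features of $2\times 2$ matrices.

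First I would observe that $\frac{\partial}{\partial x^\mu}\tau_x = \frac{1}{2}\sigma_\mu$, so \eqref{eq:SLD} reduces to finding a self-adjoint $L_\mu$ with $L_\mu \tau + \tau L_\mu = \sigma_\mu$. Since $\tau$ is strictly positive, such $L_\mu$ is uniquely determined, so it suffices to exhibit any self-adjoint solution.

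Next I would record two elementary identities that make the computation essentially algebraic. Using $\sigma_\mu\sigma_\nu + \sigma_\nu\sigma_\mu = 2\delta_{\mu\nu}I$, a direct expansion gives the anticommutator
\[
	\sigma_\mu \tau + \tau \sigma_\mu = \sigma_\mu + x^\mu I .
\]
Moreover, since $\tau$ is a $2\times 2$ matrix with $\Tr\tau = 1$, Cayley--Hamilton yields $\tau^2 = \tau - (\det \tau)\, I$.

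With these two identities in hand, the natural ansatz is $L_\mu = \sigma_\mu + c\, I + d\, \tau$ for scalars $c,d$ to be determined. Substituting and using the two identities above collapses $L_\mu \tau + \tau L_\mu - \sigma_\mu$ to a linear combination of $I$ and $\tau$, namely
\[
	\bigl(x^\mu - 2d\det\tau\bigr)\, I + (2c + 2d)\, \tau = 0.
\]
Since $I$ and $\tau$ are linearly independent (for $x \neq 0$; the case $x = 0$ is handled by continuity or directly), this forces $d = \frac{x^\mu}{2\det\tau}$ and $c = -d$, producing exactly
\[
	L_\mu = \sigma_\mu - \frac{x^\mu}{2\det\tau}(I - \tau).
\]
Self-adjointness is automatic since $\sigma_\mu$, $I$, $\tau$ are all self-adjoint and the coefficients are real.

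There is no real obstacle here: the only mild subtlety is to resist expanding everything in the Pauli basis (which gets messy) and instead work abstractly with the anticommutator identity and Cayley--Hamilton, which keeps the computation to a few lines and makes the appearance of $\det\tau$ in the denominator transparent.
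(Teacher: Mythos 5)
Your proof is correct and follows essentially the same route as the paper's: both hinge on the anticommutator identity $\sigma_\mu\tau+\tau\sigma_\mu=\sigma_\mu+x^\mu I$ and the Cayley--Hamilton relation $\tau(I-\tau)=(\det\tau)I$, the only difference being that you derive the formula from an ansatz (and note uniqueness) whereas the paper simply verifies the stated $L_\mu$ against \eqref{eq:SLD}.
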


\begin{proof}
We need only verify that $L_{\mu}$ satisfies equation \eqref{eq:SLD}.
\[
L_{\mu}\tau=\sigma_{\mu}\tau-\frac{x^{\mu}}{2\det\tau}\tau(I-\tau)=\sigma_{\mu}\tau-\frac{x^{\mu}}{2}I.
\]
Therefore
\begin{eqnarray*}
\frac{1}{2}(L_{\mu}\tau+\tau L_{\mu}) & = & \frac{1}{2}(\{\tau,\sigma_{\mu}\}-x^{\mu}I)
=\frac{1}{2}(\{\frac{1}{2}I,\sigma_{\mu}\}+\{\frac{x^{\mu}}{2}\sigma_{\mu},\sigma_{\mu}\}-x^{\mu}I)\\
& = & \frac{1}{2}(\sigma_{\mu}+x^{\mu}I-x^{\mu}I)=\frac{\sigma_{\mu}}{2}=\frac{\partial}{\partial x^{\mu}}\tau
\end{eqnarray*}
where $\{A,B\}:=AB+BA$ for $A,B\in\L(\H)$.
\end{proof}
\begin{lem}\label{lem:SLDFisher}
Let $J_{x}$ be the SLD Fisher information matrix at $x$. Then
\[
J_{x}=\left(I-\ket{x} \bra{x}\right)^{-1}
\]
where $\ket{x}=\left(\begin{array}{c}
x^{1}\\
x^{2}\\
x^{3}\end{array}\right)$.
\end{lem}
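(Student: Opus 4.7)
The plan is to compute $J_x$ directly from the definition $J_{\mu\nu} = \Tr(\partial_\mu \tau) L_\nu$, substituting the expression for $L_\nu$ obtained in the previous lemma, and then recognise the result as the Sherman--Morrison inverse of $I - \ket{x}\bra{x}$.

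First I would note that $\partial_\mu \tau = \sigma_\mu/2$, so $J_{\mu\nu} = \tfrac{1}{2}\Tr \sigma_\mu L_\nu$. Plugging in $L_\nu = \sigma_\nu - \frac{x^\nu}{2\det\tau}(I-\tau)$, the computation splits into two trace evaluations: $\Tr \sigma_\mu \sigma_\nu = 2\delta_{\mu\nu}$ gives the identity contribution, and $\Tr \sigma_\mu (I-\tau) = -\Tr \sigma_\mu \tau = -x^\mu$ follows from $\Tr \sigma_\mu = 0$ together with $\Tr \sigma_\mu \sigma_k = 2\delta_{\mu k}$ applied to the Bloch decomposition of $\tau$. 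Combining these yields
\[
J_{\mu\nu} = \delta_{\mu\nu} + \frac{x^\mu x^\nu}{4\det\tau}.
\]

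Next I would compute $\det\tau$. Since the eigenvalues of $\tau_x$ are $\tfrac{1}{2}(1 \pm |x|)$, we get $\det\tau = \tfrac{1}{4}(1-|x|^2)$, and hence
\[
J_x = I + \frac{\ket{x}\bra{x}}{1-\bra{x}x\rangle}.
\]

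Finally I would invoke the Sherman--Morrison formula (or just verify by direct multiplication) to conclude $J_x = (I - \ket{x}\bra{x})^{-1}$, since $\langle x|x\rangle = |x|^2 < 1$ on $\X$ guarantees invertibility. There is no real obstacle here: the only place where care is required is bookkeeping the factor $1/(4\det\tau) = 1/(1-|x|^2)$, so that the Sherman--Morrison identification goes through cleanly.
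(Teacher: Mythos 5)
Your proposal is correct and follows essentially the same route as the paper: compute $J_{\mu\nu}=\Tr(\partial_\mu\tau)L_\nu=\delta_{\mu\nu}+\frac{x^\mu x^\nu}{4\det\tau}$, use $\det\tau=\frac{1}{4}(1-r^2)$, and then identify $I+\frac{1}{1-r^2}\ket{x}\bra{x}$ as the inverse of $I-\ket{x}\bra{x}$ (the paper does this last step by direct multiplication, which is just the verification form of your Sherman--Morrison appeal). No gaps.
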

\begin{proof}
We calculate the elements of $J_{x}$.
\begin{eqnarray*}
\left(J_{x}\right)_{\mu\nu} & = & \Tr\frac{\partial\tau}{\partial x^{\mu}}L_{\nu}=\Tr\frac{\sigma_{\mu}}{2}\left(\sigma_{\mu}-\frac{x^{\mu}}{2\det\tau}(I-\tau)\right)=\delta_{\mu\nu}+\frac{x^{\mu}x^{\nu}}{4\det\tau}.
\end{eqnarray*}
Thus
\[
J_{x}=I+\frac{1}{4\det\tau}\ket{x} \bra{x}=I+\frac{1}{1-r^{2}}\ket{x} \bra{x}. 
\]
Then 
\[
\left(I-\ket{x} \bra{x} \right)\left(I+\frac{1}{1-r^{2}}\ket{x} \bra{x} \right)=I+\frac{1}{1-r^{2}}\ket{x} \bra{x}-\ket{x}\bra{x}-\frac{r^{2}}{1-r^{2}}\ket{x}\bra{x}=I,\]
where $r=\sqrt{\braket{x}{x}}$. 
Therefore $I+\frac{1}{1-r^{2}}\ket{x}\bra{x}=\left(I-\ket{x}\bra{x}\right)^{-1}$.
\end{proof}

\begin{lem}\label{lem:gyaku}
Given $F_{x}\in g_{x}(\M(\H))$ with $F_{x}>0$. 
There exists a weight $H_{x}$
such that\begin{equation}
\min_{M\in\M(\H)}\left\{ \Tr H_{x}g_{x}(M)^{-1}\right\} =\Tr H_{x}F_{x}^{-1}\label{eq:cond1}\end{equation}
if and only if
\begin{equation}
\Tr J_{x}^{-1}F_{x}=1.\label{eq:cond2}
\end{equation}
Further, when \eqref{eq:cond2} is satisfied,
\begin{equation}
H_{x}=kF_{x}\, J_{x}^{-1}F_{x} \label{eq:onlyWeight}
\end{equation}
is the only weight which satisfies \eqref{eq:cond1} where k is an arbitrary real positive number. 
\end{lem}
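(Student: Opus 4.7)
The plan is to translate the condition in the lemma, via Proposition \ref{prop:qbitest}, into a matrix equation for $R_x$ and then solve it. Because $F_x \in g_x(\M(\H))$, there exists $M^*$ with $g_x(M^*) = F_x$, and since $F_x > 0$, the quantity $\Tr H_x F_x^{-1}$ is finite. Thus \eqref{eq:cond1} is equivalent to saying that $M^*$ attains the minimum in Proposition \ref{prop:qbitest}, which by \eqref{eq:bestFisher} is the same as requiring
\begin{equation}
F_x = \frac{\sqrt{J_x}\, R_x \sqrt{J_x}}{\Tr R_x}, \qquad R_x := \sqrt{\sqrt{J_x^{-1}} H_x \sqrt{J_x^{-1}}}. \label{eq:planFx}
\end{equation}
So the whole lemma reduces to analyzing \eqref{eq:planFx}.

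For the ``only if'' direction, I would rearrange \eqref{eq:planFx} to
\[
R_x = (\Tr R_x)\, \sqrt{J_x^{-1}}\, F_x\, \sqrt{J_x^{-1}},
\]
and then take the trace of both sides. Using the cyclicity of the trace, this gives $\Tr R_x = (\Tr R_x)\,\Tr(J_x^{-1} F_x)$, and since $H_x > 0$ forces $R_x > 0$ and hence $\Tr R_x > 0$, dividing yields the necessary condition \eqref{eq:cond2}. Then squaring the same expression for $R_x$ and using $H_x = \sqrt{J_x}\, R_x^2 \sqrt{J_x}$ (from the definition of $R_x$) produces
\[
H_x = (\Tr R_x)^2\, F_x\, J_x^{-1}\, F_x,
\]
which is \eqref{eq:onlyWeight} with $k = (\Tr R_x)^2$; this also establishes uniqueness of $H_x$ up to a positive multiplicative constant.

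For the ``if'' direction and for verifying that \eqref{eq:onlyWeight} actually works, I would substitute $H_x = k F_x J_x^{-1} F_x$ into the definition of $R_x$. Since $F_x > 0$ and $J_x > 0$, the matrix $\sqrt{J_x^{-1}}\, F_x\, \sqrt{J_x^{-1}}$ is positive definite, so
\[
R_x = \sqrt{k}\, \sqrt{J_x^{-1}}\, F_x\, \sqrt{J_x^{-1}}
\]
by taking the positive square root. Using \eqref{eq:cond2}, $\Tr R_x = \sqrt{k}\,\Tr(J_x^{-1} F_x) = \sqrt{k}$, and a direct computation then confirms $\sqrt{J_x}\, R_x \sqrt{J_x}/\Tr R_x = F_x$. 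By Proposition \ref{prop:qbitest} the minimum is therefore attained at any $M$ with $g_x(M) = F_x$, and its value is $(\Tr R_x)^2 = k = \Tr H_x F_x^{-1}$, proving \eqref{eq:cond1}.

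The only delicate point is the square-root step: one must justify that both $R_x$ and $\sqrt{J_x^{-1}} F_x \sqrt{J_x^{-1}}$ are genuine positive operators so that the positive square root is unique and equals what we want. This is fine because $F_x$ and $H_x$ are both assumed positive definite, and $J_x > 0$ by Lemma \ref{lem:SLDFisher}. Beyond that, everything is routine linear algebra and cyclicity of the trace; no new estimation-theoretic input is needed.
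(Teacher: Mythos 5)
Your proposal is correct and follows essentially the same route as the paper: both reduce \eqref{eq:cond1} via Proposition \ref{prop:qbitest} to the matrix identity $F_x = \sqrt{J_x}R_x\sqrt{J_x}/\Tr R_x$ and then take traces and squares to extract \eqref{eq:cond2} and \eqref{eq:onlyWeight}. The only (minor, favorable) difference is that you establish uniqueness by explicitly solving for $H_x = (\Tr R_x)^2 F_x J_x^{-1} F_x$, whereas the paper instead appeals to injectivity of the map $H_x \mapsto \sqrt{J_x}R_x\sqrt{J_x}/\Tr R_x$ on normalized weights without spelling it out.
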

\begin{proof}
We first assume that there exists a weight $H_{x}$ which
satisfies \eqref{eq:cond1}. 
Let $R_{x}:=\sqrt{\sqrt{J_{x}^{-1}}H_{x}\,\sqrt{J_{x}^{-1}}}$.
According to Proposition \ref{prop:qbitest}, $F_{x}$ must be 
\[
F_{x}=\frac{\sqrt{J_{x}}R_{x}\sqrt{J_{x}}}{\Tr R_{x}},
\]
so that
$$
\Tr J_{x}^{-1}F_{x} 
= \Tr J_{x}^{-1}\frac{\sqrt{J_{x}}R_{x}\sqrt{J_{x}}}{\Tr R_{x}}
=1.
$$
Then we conclude \eqref{eq:cond2}.

We next assume that \eqref{eq:cond2} is satisfied. 
Let $H_{x}=kF_{x}\, J_{x}^{-1}F$.
It follows from Proposition \ref{prop:qbitest} that
\begin{eqnarray*}
\min_{M\in\M(\H)}\Tr H_{x}g_{x}(M)^{-1} & = & \left(\Tr\sqrt{k\sqrt{J_{x}^{-1}}F_{x}\, J_{x}^{-1}F_{x}\,\sqrt{J_{x}^{-1}}}\right)^{2}\\
 & = & k\left(\Tr J_{x}^{-1}F_{x}\right)^{2}=k\left(\Tr J_{x}^{-1}F_{x}\right)\\
 & = & \Tr (kF_{x}\, J_{x}^{-1}F_{x})\, F_{x}^{-1} = \Tr H_{x}F_{x}^{-1}.
\end{eqnarray*}
Further, the weight of the form \eqref{eq:onlyWeight} are the only weights which satisfy \eqref{eq:cond1} because 
the mapping
\begin{eqnarray*}
M^{(1)}(d,\R) \ni H_{x} \mapsto \frac{\sqrt{J_{x}}\sqrt{\sqrt{J_{x}^{-1}}H_{x}\,\sqrt{J_{x}^{-1}}}\sqrt{J_{x}}}{\Tr\sqrt{\sqrt{J_{x}^{-1}}H_{x}\,\sqrt{J_{x}^{-1}}}}=\frac{\sqrt{J_{x}}R_{x}\sqrt{J_{x}}}{\Tr R_{x}}
\in g_{x}(\M(\H))
\end{eqnarray*}
is injective where $M^{(1)}(d,\R):=\{G \mid \mbox{$G$ is $d \times d$ real positive definite matrix}, \,\Tr G=1\}$.
\end{proof}

\begin{proof}[\bf{Proof of Theorem \ref{thm:optTomo}}]
We can calculate the classical Fisher information matrix with respect to $M^{(T)}$ from \eqref{eq:pM} as follow:
\begin{equation}
g_{x}(M^{(T)})=\frac{1}{3}\left(\begin{array}{ccc}
\frac{1}{1-(x^{1})^{2}} & 0 & 0\\
0 & \frac{1}{1-(x^{2})^{2}} & 0\\
0 & 0 & \frac{1}{1-(x^{3})^{2}}\end{array}\right).\label{eq:gM}
\end{equation}
Then
\begin{eqnarray*}
\Tr J_{x}^{-1}\, g_{x}(M^{(T)}) & = & \Tr\frac{1}{3}(I-\ket r\bra r)\left(
\begin{array}{ccc}
\frac{1}{1-(x^{1})^{2}} & 0 & 0\\
0 & \frac{1}{1-(x^{2})^{2}} & 0\\
0 & 0 & \frac{1}{1-(x^{3})^{2}}
\end{array}
\right)\\
 & = & \frac{1}{3}(\frac{1}{1-(x^{1})^{2}}+\frac{1}{1-(x^{2})^{2}}+\frac{1}{1-(x^{3})^{2}}-\frac{(x^{1})^{2}}{1-(x^{1})^{2}}-\frac{(x^{2})^{2}}{1-(x^{2})^{2}}-\frac{(x^{3})^{2}}{1-(x^{3})^{2}})\\
 & = & 1
\end{eqnarray*}
We see from Lemma \ref{lem:gyaku} that $H_{x}:=k\, g_{x}(M^{(T)})\, J_{x}^{-1}\, g_{x}(M^{(T)})$ are the only weights which satisfy
\[
\min_{N\in\M(\H)}\left\{ \Tr H_{x}\, g_{x}(N)^{-1}\right\} =\Tr H_{x}\, g_{x}(M^{(T)})^{-1}.
\]
Then
\begin{flalign*}
& k\, g_{x}(M^{(T)})\, J_{x}^{-1}g_{x}(M^{(T)}) \\
& =  k\, g_{x}(M^{(T)})\,(I-\ket{x}\bra{x})g_{x}(M^{(T)})=k\,(g_{x}(M^{(T)})^{2}-g_{x}(M^{(T)})\ket{x}\bra{x} g_{x}(M^{(T)}))\\
& =  9k
\left(
	\begin{array}{ccc}
	\frac{1}{1-(x^1)^{2}} & -\frac{(x^1) (x^2)}{(1-(x^1)^{2})(1-(x^2)^{2})} & -\frac{(x^3)(x^1)}{(1-(x^3)^{2})(1-(x^1)^{2})}\\
	-\frac{(x^1)(x^2)}{(1-(x^1)^{2})(1-(x^2)^{2})} & \frac{1}{1-(x^2)^{2}} & -\frac{(x^2)(x^3)}{(1-(x^2)^{2})(1-(x^3)^{2})}\\
	-\frac{(x^3)(x^1)}{(1-(x^3)^{2})(1-(x^1)^{2})} & -\frac{(x^2) (x^3)}{(1-(x^2)^{2})(1-(x^3)^{2})} & \frac{1}{1-(x^3)^{2}}
	\end{array}
	\right) \\
&= 9k\, H_{x}^{(T)}.
\end{flalign*}
\end{proof}

\section{Discussions \label{sec:Discussions}}

\begin{figure}[t]
\centering{}
\includegraphics[scale=0.7]{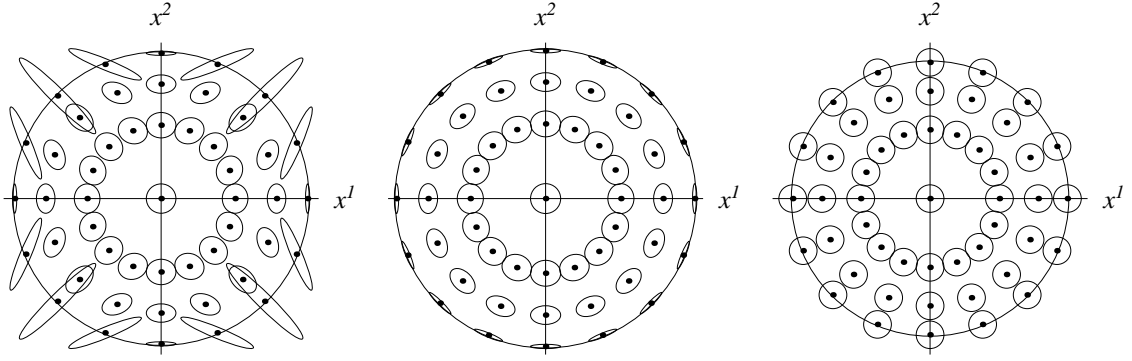}
\caption{Indicatrices for several typical weights $H_x$, 
where $H_x=H_x^{(T)}$ (left), 
$H_x=J_x$ (middle), 
and $H_x=I$ (right).
\label{fig:weight}}
\end{figure}

Let us investigate the properties of the weight $H_x^{(T)}$ that is optimal for the tomography. 
We first regard 
a weight $H_x$ as a metric tensor on the tangent space $\T_{\tau_x} \S$ at $x\in\X$,
and let us plot the indicatrix, the set of end points of tangent vectors $\v \in \T_x \S$ centered at $x$ satisfying $^{t} \v H_x \v = 1$. 
Figure \ref{fig:weight} 
shows the indicatrices on the $x^1 x^2$-plane 
for $H_x=H_x^{(T)}$ (left), 
$H_x=J_x$ (middle),
and $H_x=I$ (right).
Obviously $H_x^{(T)}$ is not rotationally symmetric, 
and is awkwardly distorted 
when $x=(x^1,x^2,x^3)\in\X$ is 
off the coordinate axes. 
This means 
that the tomography depends highly on the choice of the coordinate axes. 
Actually, 
an estimation scheme should be independent of the choice of the coordinate axes
because
their choice is completely arbitrary. 
It is therefore natural to adopt a rotationally symmetric weight $H_x$ which satisfies $U^{*} H_{(U x)} U = H_x$ for $U \in SO(3)$. 

Any rotationally symmetric weight can be represented by
\begin{equation}
H_x^{(f,g)}:=f(r) I + (g(r)-f(r)) \frac{1}{r^2}\ket{x} \bra{x}, \label{eq:roWeight}
\end{equation}
for $x \neq 0$ where $f,g$ are functions on $(0,1)$ such that $f(r)>0$ and $g(r)>0$ 
(see Appendix \ref{sec:RotationallySymmetricWeight}). 
Given a weight $H_{x}=H_x^{(f,g)}$, 
let $M^{(f,g)} \in \M(\H)$ be the corresponding optimal measurement given by \eqref{eq:optMeasurement},
and let $c_x:=\Tr H_x^{(f,g)} g_x(M^{(f,g)})^{-1}$ and $c_x^{(T)}:=\Tr H_x^{(f,g)} g_x(M^{(T)})^{-1}$. 
It then follows from \eqref{eq:qbitBound} and  \eqref{eq:gM} that
\begin{eqnarray}
c_x &=& \left( \Tr \sqrt{\sqrt{J_x^{-1}}H_x^{(f,g)}\,\sqrt{J_x^{-1}}} \right)^2 \nonumber \\
&=& \left( 2 \sqrt{f(r)} + \sqrt{(1-r^2) g(r)}  \right)^2, \label{eq:roc}
\end{eqnarray}
and 
\begin{equation}
c_x^{(T)}
= 3 (2 f(r)+(1- r^2) g(r)) + 3 t r^2 (g(r)-f(r)), \label{eq:rocT}
\end{equation}
where $t:=1-\frac{(x^1)^4+(x^2)^4+(x^3)^4}{r^4}$. 
Note that $0 \leq t \leq  \frac{2}{3}$, 
and that $t=0$ if and only if $x$ is on one of the coordinate axes, 
and $t=\frac{2}{3}$ if and only if $x$ is parallel to one of the vectors $(1,1,1)$, $(-1,1,1)$, $(1, -1,1)$, and $(1,1,-1)$. 
In addition, 
\begin{eqnarray}
c_x^{(T)}-c_x 
&=& 2 \left(\sqrt{(1-r^2) g(r)} - \sqrt{f(r)} \right)^2 + 3 r^2 \left( g(r) - f(r) \right) t \label{eq:diff1} \\
&=& 2 \left(\sqrt{(1-r^2) f(r)} - \sqrt{g(r)} \right)^2 + 3 r^2 \left(f(r) - g(r) \right) \left( \frac{2}{3}-t \right). \label{eq:diff2}
\end{eqnarray}
By considering the cases $g(r) \geq f(r)$ and $f(r) > g(r)$ separately, 
we conclude that $c_x^{(T)} \geq c_x$ for any rotationally symmetric weight $H_x^{(f,g)}$. 

For example, when $H_x^{(f,g)}=J_x$, 
for which $f(r)=1$ and $g(r)=\frac{1}{1-r^2}$, 
we see that $g(r)-f(r) \rightarrow \infty $ as $r \rightarrow 1$, 
so that $c_x^{(T)}$ becomes much larger than $c_x$. 
On the other hand, 
when $H_x^{(f,g)}=I$, 
for which $f(r)=g(r)=1$, 
the second terms in \eqref{eq:diff1} and \eqref{eq:diff2} vanish, 
and the difference $c_x^{(T)}-c_x$ becomes relatively small. 
Figure \ref{Flo:bound} shows the behavior of $c_{r \v}$ (solid) and $c_{r \v}^{(T)}$ (dashed) as functions of radius $r$ 
in the direction $\v=\frac{1}{\sqrt{3}}(1,1,1)^t$ 
for $H_x^{(f,g)}=J_x$ (left) and $H_x^{(f,g)}=I$ (right). 
When $H_x^{(f,g)}=J_x$, 
we see that $c_{r \v}^{(T)}$ diverges as $r \rightarrow 1$, 
while $c_{r \v}$ converges to $9$. 
When $H_x^{(f,g)}=I$, 
on the other hand, 
$c_{r \v}^{(T)}$ and $c_{r \v}$ converge to $6$ and $4$ respectively as $r \rightarrow 1$, 
and their difference is relatively small. 

\begin{figure}[t]
\centering{}
\includegraphics[scale=0.7]{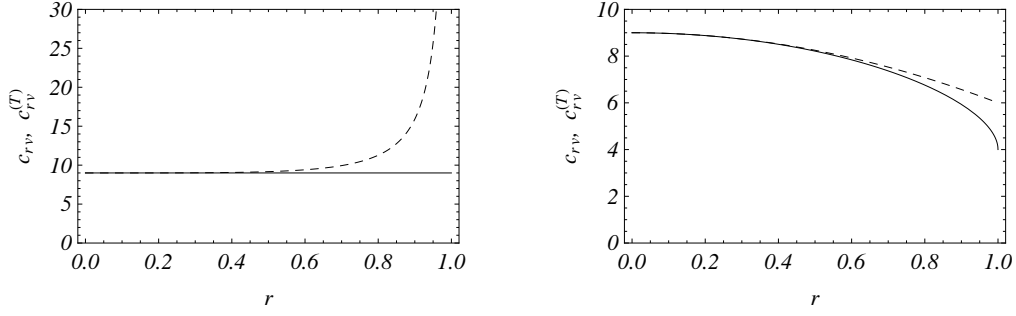}
\caption{
The behavior of $c_{r \v}$ (solid) and $c_{r \v}^{(T)}$ (dashed) as functions of radius $r$ 
in the direction $\v=\frac{1}{\sqrt{3}}(1,1,1)^t$ 
for $H_x^{(f,g)}=J_x$ (left) and $H_x^{(f,g)}=I$ (right). 
}
\label{Flo:bound}
\end{figure}

Now let us make a numerical simulation to compare the asymptotic performance of the tomography and the 
optimal adaptive estimation schemes
for $H_x=J_x$ and $H_x=I$. 
We set the qubit state to be estimated as $\tau_{x_{0}}$ with $x_{0}=(0.55,0.55,0.55)$. 
Since the optimal estimator given in Proposition \ref{prop:qbitEstM} depends on the true value of $x\in\X$, 
we shall invoke an adaptive estimation scheme in evaluating $\Tr H_x g_x(M(x))^{-1}$,
with $M(x)$ being the optimal POVM for $x\in\X$, as follows \cite{nagaokaAdaptive,strong}: 
We begin by choosing $\hat{x}^{(0)}\in\X$ arbitrarily. 
Suppose that $M(\hat{x}^{(0)})$ is applied and that the outcome $n_{1}\in\{1,2,\dots,s\}$ is obtained. 
The maximum likelihood estimator is given by
\[
\hat{x}^{(1)}:=\argmax_{x\in\X}l_{1}(x),
\]
where
\[
l_{1}(x):=\log\Tr\tau(x)\, M_{n_{1}}(\hat{x}^{(0)}).
\]
At the $m$th stage $(m\geq2)$, suppose that $M(\hat{x}^{(m-1)})$ is applied and that the outcome $n_{m}\in\{1,2,\dots,s\}$ is obtained. 
The maximum likelihood estimator at the $m$th stage is given by
$$\hat{x}^{(m)}:=\argmax_{x\in\X}l_{m}(x),$$
where
$$l_{m}(x):=\sum_{i=1}^{m}\log\Tr\tau(x)\, M_{n_{i}}(\hat{x}^{(i-1)}).$$
Because of the strong consistency and the asymptotic efficiency of the adaptive estimation \cite{strong}, 
the sequence $m \times \Tr H_{x_0} V[\hat{x}^{(m)}]$ of the weighted covariances multiplied by $m$ 
converges to $\Tr H_{x_0} g_{x_0}(M(x_0))^{-1}$ as $m \rightarrow \infty$. 
Let us demonstrate this behavior by a numerical simulation. 
We have performed two kinds of numerical simulations in which the weight $H_x$ has been set as $H_x=J_x$ and $H_x=I$. 
These results are shown in the left and the right figure in Figure \ref{Flo:sim},  
where the solid and dashed curves correspond to the adaptive estimation and the tomography, 
and the solid and dashed horizontal lines correspond to the theoretical limits. 
As figures of merit, 
we have plotted in Figure \ref{Flo:sim} the sample averages of
$2m \times B(\tau_{x_0},\tau_{\hat{x}^{(m)}})$, 
where $B(\cdot,\cdot)$ is the Bures distance, 
or
$m \times |x_0-\hat{x}^{(m)}|^2$
instead of 
$m \times \Tr J_{x_0} V[\hat{x}^{(m)}]$ 
or
$m \times \Tr V[\hat{x}^{(m)}]$  
because they are asymptotically equivalent (See Appendix \ref{sec:Bures}). 
The sample averages are calculated by repeating the estimation schemes $1000$ times. 
We see that the sample average of each estimation scheme approaches the corresponding theoretical value,
as $m$ becomes large. 
We further observe that the adaptive estimation scheme is more efficient than the tomography, 
and the difference of their performances
is noticeable when $H_x=J_x$. 
We could conclude that the tomography is not efficient for a rotationally symmetric weight
that is natural in estimating an unknown qubit state.

\begin{figure}[t]
\centering{}
\includegraphics[scale=0.7]{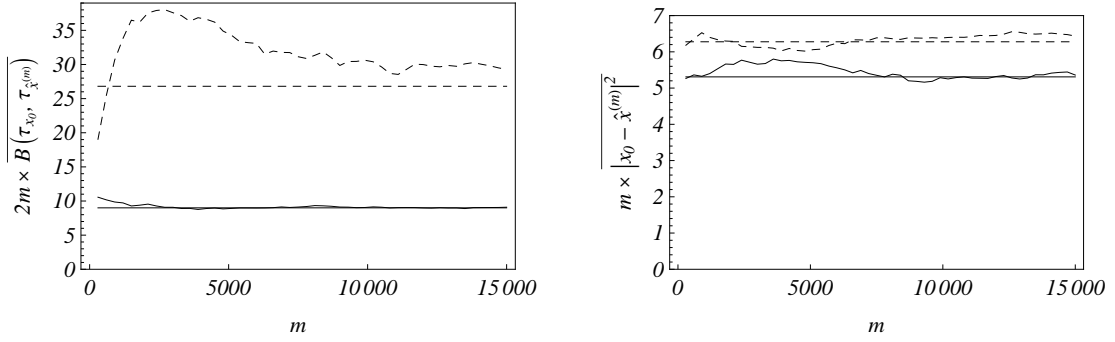}
\caption{
A numerical comparison between the tomography and the optimal adaptive estimation for the weight $H_x$, 
where $H_x$ has been set as $H_x=J_x$ (left) or $H_x=I$ (right). 
The solid and dashed curves correspond to the adaptive estimation and the tomography, respectively, 
and the solid and dashed horizontal lines correspond to the theoretical limit. 
As a figure of merit, 
we have plotted the sample averages of
$2m \times B(\tau_{x_0},\tau_{\hat{x}^{(m)}})$ 
or
$m \times |x_0-\hat{x}^{(m)}|^2$
instead of 
$m \times \Tr H_{x_0} V[\hat{x}^{(m)}]$. 
}
\label{Flo:sim}
\end{figure}

Finally we shall touch upon a generation to  a higher dimensional Hilbert space $\H$. 
Let $q = \dim \H (\geq 3)$ and 
let $\{\ket{e_i^{(\alpha)}}\}_{i=1}^{q}$ 
be an orthonormal basis 
for each $\alpha=1,\dots,q+1$
satisfying $|\braket{e_i^{(\alpha)}}{e_j^{(\beta)}}|^2=\frac{1}{q}$ ($\alpha \neq \beta$) for all $i,j$. 
A finite subset $\{\ket{e_i^{(\alpha)}}\}_{\alpha,i}$ of the Hilbert space $\H$ is called a full set of {\em mutually unbiased bases}. 
It is known that a full set of mutually unbiased bases exists when $q$ is a prime number or the power of a prime \cite{mutual}.
As before, we regard the uniform combination 
$$M^{(T)}:=\frac{1}{q+1}\bigoplus_{\alpha=1}^{q+1} M^{(\alpha)}$$
of the PVMs $M^{(\alpha)}:=( \ket{e_1^{(\alpha)}} \bra{e_1^{(\alpha)}} ,\cdots ,\ket{e_{q}^{(\alpha)}} \bra{e_{q}^{(\alpha)}})\in \M(\H)$ 
as a tomography on $\H$.
Let $\S$ be the set of strictly positive density operators on $\H$,
and let $x=\{x_{\alpha,i}\}$ be an affine parametrization of $\S$ given by
$$
\tau_x=\frac{1}{q}I + \sum_{\alpha=1}^{q+1} \sum_{i=1}^{q-1} x_{\alpha,i} (\ket{e_i^{(\alpha)}} \bra{e_i^{(\alpha)}} - \frac{1}{q}I).
$$
Figure \ref{Flo:high} shows the behavior of 
$c_{r \v}$ (solid) and 
$c_{r \v}^{(T)}$ (dashed) as functions of $r$
in the direction $\v\in \R^{q^2-1}$ where 
\begin{eqnarray}
& c_x & := \min\{\Tr J_x g_x(M)^{-1} \mid M\in\M(\H)\}, \nonumber \\
& c_x^{(T)} & := \Tr J_x g_x(M^{(T)})^{-1} \nonumber
\end{eqnarray}
with
$v_{11}=1$ and $v_{\alpha i}=0$ $(\alpha \neq 1$ or $i \neq 1)$
for $\dim \H=3$ (left) and $\dim \H=4$ (right). 
We see that the behavior for $\dim \H=3$ and $4$ are almost the same as that for $\dim \H=2$ plotted in Figure \ref{Flo:bound}. 
This observation suggests
that the same non-optimality result would hold for $\dim\H \geq 3$.

\begin{figure}[t]
\centering{}
\includegraphics[scale=0.7]{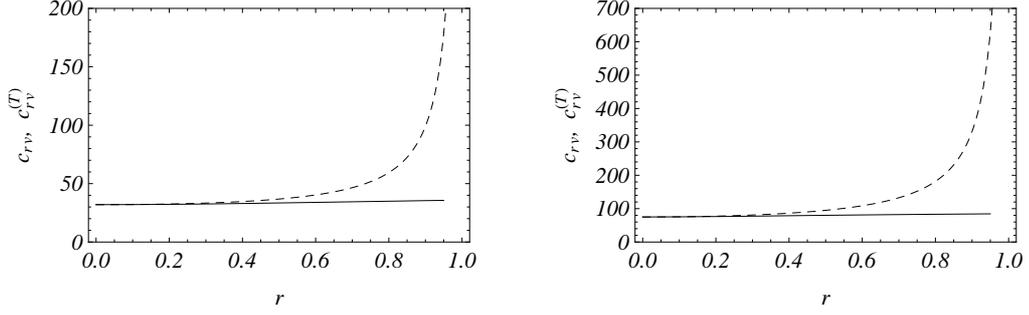}
\caption{
The behavior of 
$c_{r \v}$ (solid) and 
$c_{r \v}^{(T)}$ (dashed) as functions of $r$
in the direction $\v\in \R^{q^2-1}$ where $v_{11}=1$ and $v_{\alpha i}=0$ $(\alpha \neq 1$ or $i \neq 1)$
for $\dim \H=3$ (left) and $\dim \H=4$ (right).
}
\label{Flo:high}
\end{figure}

\section*{Acknowledgment}
The author is grateful to Prof. A. Fujiwara for stimulating discussions and valuable comments.

\appendix 

\section*{Appendices}
\section{Proofs of Propositions \ref{prop:qbitest} and \ref{prop:qbitEstM}  \label{sec:qbitEstProof}}

In this appendix, we give simple proofs of
Propositions \ref{prop:qbitest} and \ref{prop:qbitEstM}
for the reader's convenience. 
We start with some lemmas which hold for an arbitrary finite dimensional Hilbert space $\H$.
Let us define the inner product $\inner{\cdot}{\cdot}_{\theta}$ on $\L(\H)$,  as 
\[
	\inner {A}{B}_{\theta}:=\frac{1}{2} \Tr \rho_{\theta} (A^{*}B+BA^{*}).
\]
Then we can rewrite $g_{\theta}(M)$ by SLD as follows:
\[
	g_{\theta}(M)=
	\left[
	\sum_{x}\frac{\inner{L_{i}}{M_{x}}_{\theta}\inner{L_{j}}{M_{x}}_{\theta}}{\inner I{M_{x}}_{\theta}}
	\right]
	_{1\leq i,j\leq d},
\]
Further we can also rewrite $J_{\theta}$ as 
$J_{\theta}=[\inner{L_i}{L_j}]_{ij}$.
Let us define $\hat{L}^i$ as  \eqref{eq:Lhat}.
Let us define
\[
	\hat{g}_{\theta}(M)=
	\left[
	\sum_{x}\frac{\inner{\hat{L}^{i}}{M_{x}}_{\theta}\inner{\hat{L}^{j}}{M_{x}}_{\theta}}{\inner I{M_{x}}_{\theta}}
	\right]
	_{1\leq i,j\leq d}.
\]

\begin{lem}
$\{\hat{L}^i\}_i \cup \{I\}$ is orthonormal with respect to $\inner{\cdot}{\cdot}_{\theta}$.
\end{lem}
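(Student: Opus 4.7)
The plan is to verify the three types of orthonormality conditions for the family $\{\hat L^i\}_i \cup \{I\}$ separately. The condition $\inner{I}{I}_\theta = 1$ follows immediately from the definition of $\inner{\cdot}{\cdot}_\theta$ together with $\Tr \rho_\theta = 1$.

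For the mixed inner products $\inner{\hat L^i}{I}_\theta$, I would use the linearity of the definition \eqref{eq:Lhat} to reduce the claim to $\inner{L_k}{I}_\theta = 0$ for each $k$. Unfolding the inner product, this amounts to showing $\Tr \rho_\theta L_k = 0$, which I would derive by differentiating the identity $\Tr \rho_\theta = 1$ with respect to $\theta^k$ and substituting the SLD equation \eqref{eq:SLD}.

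The main computation is the orthonormality of the $\hat L^i$ themselves. Using \eqref{eq:Lhat} together with the identity $\inner{L_k}{L_l}_\theta = (J_\theta)_{kl}$ recorded just before the lemma, bilinearity of the inner product gives
\[
\inner{\hat L^i}{\hat L^j}_\theta = \sum_{k,l} K^{ik}\,(J_\theta)_{kl}\,K^{jl},
\]
which in matrix form is the $(i,j)$-entry of $K J_\theta K^t$. Substituting $K = U^{-1}\sqrt{J_\theta^{-1}}$ from Proposition \ref{prop:qbitEstM} and using the symmetry of $\sqrt{J_\theta^{-1}}$ together with the orthogonality of $U$ (so that $U^{-1} = U^t$), the product collapses to $U^{-1}(U^{-1})^t = I$, yielding $\delta_{ij}$. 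No genuine obstacle arises; the only care needed is tracking transposes and recalling that $\sqrt{J_\theta^{-1}}$ is symmetric as the positive square root of the symmetric positive definite matrix $J_\theta^{-1}$.
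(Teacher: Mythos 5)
Your proposal is correct and follows essentially the same route as the paper's proof: the same bilinear expansion $\inner{\hat L^i}{\hat L^j}_\theta=(KJ_\theta K^t)^{ij}=(U^{-1}\sqrt{J_\theta^{-1}}J_\theta\sqrt{J_\theta^{-1}}U)^{ij}=\delta^{ij}$, the same reduction of $\inner{\hat L^i}{I}_\theta$ to $\Tr\rho_\theta L_k=\Tr\partial_k\rho_\theta=0$ via the SLD equation, and the same observation $\inner{I}{I}_\theta=\Tr\rho_\theta=1$. No issues.
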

\begin{proof}
\[
\inner{\hat{L}^{i}}{\hat{L}^{j}}_{\theta}=\sum_{s,t}K^{is}K^{jt}\inner{L_{s}}{L_{t}}_{\theta}=\sum_{s,t}K^{is}J_{\theta,st}(K^{*})^{tj}=(U^{-1}\sqrt{J_{\theta}^{-1}}J_{\theta}\sqrt{J_{\theta}^{-1}}U)^{ij}=\delta^{ij}.
\]
Further
\[
\inner{\hat{L}^i}{I}_{\theta}=\sum_{s}K^{is}\inner{L_s}{I}_{\theta}=\sum_{s}K^{is}\Tr \rho_{\theta}L_s
=\sum_{s}K^{is}\Tr \partial_i\rho_{\theta}=0,
\]
then
\[
\inner{I}{I}_\theta=\Tr \rho_\theta=1.
\]
\end{proof}

\begin{lem}\label{lem:Trg}
It holds that
\[
	\Tr \hat{g}_{\theta}(M)\leq\dim\H-1,
\]
for all $M\in\M(\H)$. 
\end{lem}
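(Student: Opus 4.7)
The plan is to apply Bessel's inequality in the inner product space $(\L(\H),\inner{\cdot}{\cdot}_\theta)$ to each outcome $M_x$ separately, exploiting the fact that $\{\hat L^i\}_{i=1}^d\cup\{I\}$ is an orthonormal system by the previous lemma. Since
$$\Tr\hat g_\theta(M)=\sum_x\sum_i\frac{\inner{\hat L^i}{M_x}_\theta^2}{\inner{I}{M_x}_\theta}$$
is essentially a sum of squared Fourier coefficients of each self-adjoint $M_x$ against the family $\{\hat L^i\}$, Bessel yields
$$\sum_i\inner{\hat L^i}{M_x}_\theta^2\leq\inner{M_x}{M_x}_\theta-\inner{I}{M_x}_\theta^2,$$
where the correction term accounts for the extra unit vector $I$ in the orthonormal system. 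Dividing by $\inner{I}{M_x}_\theta=\Tr\rho_\theta M_x$, summing over $x$, and using $\sum_x\Tr\rho_\theta M_x=\Tr\rho_\theta=1$ from POVM completeness produces
$$\Tr\hat g_\theta(M)\leq\sum_x\frac{\Tr\rho_\theta M_x^2}{\Tr\rho_\theta M_x}-1.$$

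It then remains to show that the sum on the right is bounded by $\dim\H$. For this I would invoke the operator inequality $M_x^2\leq(\Tr M_x)M_x$, which holds for every $M_x\geq0$ because its largest eigenvalue is dominated by its trace. Testing against $\rho_\theta\geq0$ gives $\Tr\rho_\theta M_x^2\leq(\Tr M_x)\Tr\rho_\theta M_x$, hence $\Tr\rho_\theta M_x^2/\Tr\rho_\theta M_x\leq\Tr M_x$. Summing over $x$ and using $\sum_x M_x=I$ yields $\sum_x\Tr M_x=\Tr I=\dim\H$, which combined with the inequality above gives the claim.

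The main obstacle is identifying the right operator bound in this second step. The naive estimate $M_x^2\leq M_x$ coming from $M_x\leq I$ only produces $\sum_x\Tr\rho_\theta M_x^2/\Tr\rho_\theta M_x\leq s$, the number of POVM outcomes, which is useless since $s$ may be arbitrary. The key insight is to replace the outcome-based bound $M_x\leq I$ by the positivity-based eigenvalue bound $\|M_x\|_{\mathrm{op}}\leq\Tr M_x$, so that the sum over outcomes converts into a single trace over $\H$ and the dimension of $\H$ enters naturally in place of the cardinality of outcomes.
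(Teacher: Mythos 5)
Your proof is correct and follows essentially the same route as the paper: Bessel's inequality for the orthonormal system $\{\hat L^i\}_i\cup\{I\}$ applied to each $M_x$, followed by the bound $\inner{M_x}{M_x}_\theta=\Tr\rho_\theta M_x^2\leq(\Tr M_x)(\Tr\rho_\theta M_x)$ and summation over outcomes using $\sum_x\Tr M_x=\Tr I=\dim\H$. The only difference is that you make explicit the justification $M_x^2\leq(\Tr M_x)M_x$ for the second inequality, which the paper states without comment.
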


\begin{proof}
\begin{eqnarray}
	\Tr \hat{g}_{\theta}(M) 
	& = & 
	\sum_{x}\frac{\sum_{i=1}^{d}\inner{\hat{L}^i}{M_{x}}_{\theta}^{2}}{\inner I{M_{x}}_{\theta}}
	=\sum_{x}\left(\sum_{i=1}^{d}\frac{\inner{\hat{L}^i}{M_{x}}_{\theta}^{2}+\inner I{M_{x}}_{\theta}^{2}}{\inner I{M_{x}}_{\theta}}-\inner I{M_{x}}_{\theta}\right)\nonumber \\
	 & \leq & \sum_{x}\left(\frac{\inner{M_{x}}{M_{x}}_{\theta}}{\inner I{M_{x}}_{\theta}}-\inner I{M_{x}}_{\theta}\right)\label{eq:bessel}\\
	 & = & \sum_{x}\frac{\inner{M_{x}}{M_{x}}_{\theta}}{\inner I{M_{x}}_{\theta}}-1\leq\sum_{x}\Tr M_{x}-1\label{eq:showrtz1}\\
	 & = & \Tr I-1=\dim\H-1.\nonumber
\end{eqnarray}
Inequality \eqref{eq:bessel} follows from Bessel's inequality, and 
inequality \eqref{eq:showrtz1} from 
\[
	\inner {I}{M_{x}}_{\theta}\,\Tr M_{x}
	=(\Tr\rho_{\theta}M_{x})\,(\Tr M_{x}) \geq \Tr\rho_{\theta}M_{x}^{2}
	=\inner{M_{x}}{M_{x}}_{\theta}.
\]
\end{proof}

\begin{lem}\label{lem:Convexity}
    Let  $g_{\theta}(\M(\H)):=\{g_{\theta}(M) \mid M \in \M(\H)\}$. Then
	$g_{\theta}(\M(\H))$ is a convex set.
	Similarly,  $\hat{g}_{\theta}(\M(\H))$ is also a convex set.
\end{lem}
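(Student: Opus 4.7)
The plan is to prove convexity directly by exhibiting, for any two POVMs and any convex weight, a POVM whose Fisher information matrix realises the corresponding convex combination. The natural candidate is the randomised POVM already introduced in the paper, namely $pM \oplus (1-p)N$.

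Concretely, I would take $M=(M_{1},\dots,M_{s_{1}}) \in \M(\H)$, $N=(N_{1},\dots,N_{s_{2}}) \in \M(\H)$, and $p\in[0,1]$, and form the POVM $P := pM \oplus (1-p)N$. The key observation is homogeneity of each summand of $g_{\theta}$. For a block of $P$ of the form $pM_{x}$, one has $\partial_{i}\Tr\rho_{\theta}(pM_{x}) = p\,\partial_{i}\Tr\rho_{\theta}M_{x}$ and $\Tr\rho_{\theta}(pM_{x}) = p\,\Tr\rho_{\theta}M_{x}$, so the corresponding summand in the Fisher information matrix becomes
\[
\frac{p^{2}\bigl(\partial_{i}\Tr\rho_{\theta}M_{x}\bigr)\bigl(\partial_{j}\Tr\rho_{\theta}M_{x}\bigr)}{p\,\Tr\rho_{\theta}M_{x}} = p\cdot\frac{\bigl(\partial_{i}\Tr\rho_{\theta}M_{x}\bigr)\bigl(\partial_{j}\Tr\rho_{\theta}M_{x}\bigr)}{\Tr\rho_{\theta}M_{x}}.
\]
Summing the contributions from the first $s_{1}$ blocks gives $p\,g_{\theta}(M)$, and the remaining $s_{2}$ blocks similarly yield $(1-p)\,g_{\theta}(N)$. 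Hence $g_{\theta}(P) = p\,g_{\theta}(M) + (1-p)\,g_{\theta}(N)$, which proves convexity of $g_{\theta}(\M(\H))$.

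The same argument, word for word, applies to $\hat{g}_{\theta}(\M(\H))$: by the identity $\partial_{i}\Tr\rho_{\theta}M_{x} = \inner{L_{i}}{M_{x}}_{\theta}$ and the linearity of $\hat{L}^{i}$ in the $L_{k}$'s, the summands defining $\hat{g}_{\theta}(M)$ have exactly the same $1$-homogeneous structure in $M_{x}$ as those of $g_{\theta}(M)$, so $\hat{g}_{\theta}(pM\oplus(1-p)N) = p\,\hat{g}_{\theta}(M) + (1-p)\,\hat{g}_{\theta}(N)$.

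There is essentially no obstacle here; the only thing that could go wrong is if the randomised combination were not itself a POVM, but this is precisely the closure property built into the definition of $\M(\H)$ via the $\oplus$ operation. The computation is entirely a matter of checking that each summand of the Fisher information is homogeneous of degree one in the POVM element, which it is.
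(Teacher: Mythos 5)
Your proof is correct and follows essentially the same route as the paper: the paper also forms the randomized POVM $pM^{(1)}\oplus(1-p)M^{(2)}$ and cancels the factor $p^{2}/p=p$ in each summand to get $g_{\theta}(pM^{(1)}\oplus(1-p)M^{(2)})=p\,g_{\theta}(M^{(1)})+(1-p)\,g_{\theta}(M^{(2)})$. No gaps.
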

\begin{proof}
	Let $M^{(1)},M^{(2)}\in\M(\H)$ and let $0\leq p\leq1$. 
	Then we see
	\begin{multline}
		g_{\theta}(p M^{(1)} \oplus (1-p) M^{(2)})_{ij} \\ 
		=\sum_{x} \frac{p^2 \inner{L_i}{M^{(1)}_{x}}_{\theta} \inner{L_j}{M^{(1)}_{x}}_{\theta}}
		{p \inner{I}{M^{(1)}_{x}}_{\theta}} 
		+ \sum_{y} \frac{(1-p)^2 \inner{L_i}{M^{(2)}_{y}}_{\theta} \inner{L_j}{M^{(2)}_{y}}_{\theta}}
		{(1-p) \inner{I}{M^{(2)}_{y}}_{\theta}}\\
		=\sum_{x} p \frac{\inner{L_i}{M^{(1)}_{x}}_{\theta} \inner{L_j}{M^{(1)}_{x}}_{\theta}}
		{\inner{I}{M^{(1)}_{x}}_{\theta}} 
		+ \sum_{y} (1-p) \frac{\inner{L_i}{M^{(2)}_{y}}_{\theta} \inner{L_j}{M^{(2)}_{y}}_{\theta}}
		{\inner{I}{M^{(2)}_{y}}_{\theta}}\\
		=pg_{\theta}(M^{(1)})_{ij}+(1-p)g_{\theta}(M^{(2)})_{ij}.\label{eq:affine}
	\end{multline}
	This implies that any convex combination of $g_{\theta}(M^{(1)})$ and $g_{\theta}(M^{(2)})$ belongs to $g_{\theta}(\M(\H))$.
\end{proof}

Now we restrict ourselves to the case when $\dim \H=2$. 
In this case it is necessary that $1 \leq  d \leq 3$.

\begin{lem}\label{lem:vv}
Given $\v=(v_{1},\dots,v_{d})^{t}\in\R^{d}$ 
such that $\left|\v\right|=1$, then 
\begin{equation}
\hat{g}_{\theta}(M^{(\v)})=\ket{\v}\bra{\v} \label{eq:vv},
\end{equation}
where $M^{(\v)}$ is a projection-valued measurement given by the spectral decomposition of $L_{\v}:=\sum_{i=1}^{d}v_{i}\hat{L}^i.$
\end{lem}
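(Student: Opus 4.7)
The plan is to exploit the spectral decomposition of $L_{\v}$ in the two-dimensional Hilbert space, together with the orthonormality of $\{\hat{L}^i\} \cup \{I\}$ established in the preceding lemma. Since $\dim \H = 2$ and $L_{\v}$ is self-adjoint, I can write $L_{\v} = \lambda_+ P_+ + \lambda_- P_-$ with $P_+ + P_- = I$, so the two projections composing $M^{(\v)}$ are expressible as
\[
P_{+} = \frac{L_{\v} - \lambda_- I}{\lambda_+ - \lambda_-}, \qquad P_{-} = \frac{\lambda_+ I - L_{\v}}{\lambda_+ - \lambda_-}.
\]

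Next I would compute the inner products appearing in $\hat{g}_{\theta}(M^{(\v)})_{ij}$ using linearity. By construction $\inner{\hat{L}^i}{L_{\v}}_{\theta} = \sum_k v_k \inner{\hat{L}^i}{\hat{L}^k}_{\theta} = v_i$, while $\inner{\hat{L}^i}{I}_{\theta} = 0$ and $\inner{I}{I}_{\theta} = 1$ by the orthonormality lemma. This immediately yields
\[
\inner{\hat{L}^i}{P_{\pm}}_{\theta} = \frac{\pm v_i}{\lambda_+ - \lambda_-}, \qquad \inner{I}{P_+}_{\theta} = \frac{-\lambda_-}{\lambda_+-\lambda_-}, \qquad \inner{I}{P_-}_{\theta} = \frac{\lambda_+}{\lambda_+-\lambda_-}.
\]
Plugging these into the definition of $\hat{g}_{\theta}$, the $(i,j)$ entry becomes
\[
\hat{g}_{\theta}(M^{(\v)})_{ij} = v_i v_j \cdot \frac{1}{\lambda_+ - \lambda_-}\left( \frac{1}{-\lambda_-} + \frac{1}{\lambda_+}\right) = \frac{v_i v_j}{-\lambda_+ \lambda_-}.
\]

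Finally, I would show $-\lambda_+ \lambda_- = 1$ by using the orthonormality of $L_{\v}$ with respect to $I$. Setting $p := \inner{I}{P_+}_{\theta} = \Tr \rho_{\theta} P_+$, the conditions $\inner{I}{L_{\v}}_{\theta} = \lambda_+ p + \lambda_-(1-p) = 0$ and $\inner{L_{\v}}{L_{\v}}_{\theta} = \lambda_+^2 p + \lambda_-^2 (1-p) = 1$ together force $-\lambda_+\lambda_- = 1$. This delivers $\hat{g}_{\theta}(M^{(\v)})_{ij} = v_i v_j$, which is exactly the matrix $\ket{\v}\bra{\v}$.

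No step looks genuinely hard: the only subtlety is recognizing that the normalization $-\lambda_+\lambda_-=1$ is automatic from the two orthonormality relations involving $L_{\v}$, rather than from an a priori traceless condition on $L_{\v}$. Once that observation is in hand, the remainder is a direct calculation using bilinearity of $\inner{\cdot}{\cdot}_{\theta}$.
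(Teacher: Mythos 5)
Your proof is correct, but it follows a genuinely different route from the paper's. The paper argues indirectly: it normalizes the spectral projections to an orthonormal family $\tilde{M}_x^{(\v)}$, notes that $\bra{\v}\hat{g}_{\theta}(M^{(\v)})\ket{\v}=\sum_x\inner{L_{\v}}{\tilde{M}_x^{(\v)}}_{\theta}^2$ saturates Bessel's inequality because $L_{\v}$ lies in the span of the $\tilde{M}_x^{(\v)}$, and then combines the resulting equality $\bra{\v}\hat{g}_{\theta}(M^{(\v)})\ket{\v}=1$ with the general bound $\Tr\hat{g}_{\theta}(M)\leq\dim\H-1=1$ of Lemma \ref{lem:Trg} and positive semidefiniteness to pin down the whole matrix. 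You instead compute every entry explicitly by writing $P_{\pm}$ as affine combinations of $L_{\v}$ and $I$ and exploiting the orthonormality relations; the key normalization $-\lambda_+\lambda_-=1$ drops out of $\inner{I}{L_{\v}}_{\theta}=0$ and $\inner{L_{\v}}{L_{\v}}_{\theta}=1$ exactly as you say. Your version is self-contained (it does not invoke Lemma \ref{lem:Trg}) and makes the mechanism transparent, at the cost of being tied to the two-outcome structure of $\dim\H=2$; the paper's version reuses machinery it needs anyway and isolates the one place where two-dimensionality enters (the bound $\dim\H-1$). Two small points you should make explicit: $\lambda_+\neq\lambda_-$ (otherwise $L_{\v}$ would be a multiple of $I$, contradicting $\inner{L_{\v}}{I}_{\theta}=0$ and $\inner{L_{\v}}{L_{\v}}_{\theta}=1$), which legitimizes your division by $\lambda_+-\lambda_-$; and the denominators $\inner{I}{P_{\pm}}_{\theta}$ are strictly positive because $\rho_{\theta}>0$ and $P_{\pm}\neq0$ (equivalently, because $-\lambda_+\lambda_-=1$ forces $\lambda_+$ and $\lambda_-$ to be nonzero and of opposite signs). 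Neither is a real gap.
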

\begin{proof}
\begin{align}
\bra{\v}\hat{g}_{\theta}(M^{(\v)})\ket{\v} &=  
\sum_{x}\sum_{i,j}v_{i}v_{j}
\frac{\inner{\hat{L}^i}{M_{x}^{(\v)}}_{\theta} \inner{\hat{L}^j}{M_{x}^{(\v)}}_{\theta}}
{\inner I{M_{x}^{(\v)}}_{\theta}}\nonumber \\
 &=  \sum_{x}\frac{\inner{L_{\v}}{M_{x}^{(\v)}}_{\theta}^{2}}{\inner {I}{M_{x}^{(\v)}}_{\theta}}
=  \sum_{x}\frac{\inner{L_{\v}}{M_{x}^{(\v)}}_{\theta}^{2}}{\inner {M_{x}^{(\v)}}{M_{x}^{(\v)}}_{\theta}} 
= \sum_x \inner{L^{\v}}{\tilde{M}_x^{(\v)}}_{\theta}^2 \nonumber\\
&\leq  \inner{L_{v}}{L_{v}}_{\theta}=1 \label{bessel2},
\end{align}
where $\tilde{M}_{x}^{(\v)}:=M_{x}^{(\v)}/ \sqrt{\inner{M_{x}^{(\v)}}{M_{x}^{(\v)}}}$. 
Because $\{\tilde{M}_{x}^{(\v)}\}_x$ is orthonormal with respect to $\inner{\cdot }{\cdot }_{\theta}$,
the inequality \eqref{bessel2} follows from Bessel's inequality.
Further by definition, $L_{\v} \in \spann \{\tilde{M}_{x}^{(\v)}\}_x$. Therefor
\begin{equation}
\bra{\v}\hat{g}_{\theta}(M^{(\v)})\ket{\v} =1. \label{eq:upper}
\end{equation}
According to Lemma \ref{lem:Trg}, 
\begin{equation}
\Tr \hat{g}_{\theta}(M^{(\v)}) \leq  \dim\H-1 =1. \label{eq:trg1}
\end{equation}
We can conclude \eqref{eq:vv} 
from \eqref{eq:upper} and \eqref{eq:trg1} and $\hat{g}_{\theta}(M^{(\v)}) \geq 0$.
\end{proof}

\begin{lem}\label{pro:seteq}
Let $M^{+}(d,\R)$ be the set of $d \times d$ real positive semi definite matrices. Then
\[
\hat{g}_{\theta}(\M(\H))=\left\{ G\in M^{+}(d,\R) \mid \,\Tr G \leq 1\right\} .
\]
\end{lem}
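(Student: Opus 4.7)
The plan is to prove the two inclusions separately. For the forward inclusion $\hat{g}_{\theta}(\M(\H)) \subseteq \{G \in M^{+}(d,\R) \mid \Tr G \leq 1\}$, I would verify three properties of $\hat{g}_{\theta}(M)$ for an arbitrary $M \in \M(\H)$: (i) its entries are real, since each $\hat{L}^i$ and $M_x$ is self-adjoint and hence $\inner{\hat{L}^i}{M_x}_\theta \in \R$; (ii) it is positive semidefinite, because the $x$-summand can be written as $\ket{\alpha_x}\bra{\alpha_x}$ with $\alpha_x^i := \inner{\hat{L}^i}{M_x}_\theta / \sqrt{\inner{I}{M_x}_\theta}$ (restricting, without loss of generality, to those $x$ with $M_x \neq 0$, which is equivalent to $\inner{I}{M_x}_\theta \neq 0$ since $\rho_\theta > 0$); and (iii) $\Tr \hat{g}_{\theta}(M) \leq 1$ is exactly the content of Lemma \ref{lem:Trg} specialized to $\dim \H = 2$.

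For the reverse inclusion, the strategy is to realize any admissible $G$ as a convex combination of elements already known to lie in $\hat{g}_{\theta}(\M(\H))$. I would assemble three ingredients. First, by Lemma \ref{lem:vv}, every rank-one projection $\ket{\v}\bra{\v}$ with unit $\v \in \R^d$ belongs to $\hat{g}_{\theta}(\M(\H))$. Second, the trivial one-outcome POVM $(I) \in \M(\H)$ satisfies $\hat{g}_{\theta}((I)) = 0$, because $\inner{\hat{L}^i}{I}_{\theta} = \Tr \rho_\theta \hat{L}^i = 0$ for every $i$ (this was already noted in the orthonormality lemma). Third, Lemma \ref{lem:Convexity}, extended to finitely many summands by induction, lets us take arbitrary finite convex combinations.

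Given $G \in M^{+}(d,\R)$ with $t := \Tr G \leq 1$, I would take a spectral decomposition $G = \sum_{i=1}^{d} \lambda_i \ket{u_i}\bra{u_i}$ with $\lambda_i \geq 0$, $\sum_i \lambda_i = t$, and $u_i \in \R^d$ unit vectors, and then write
\[
G = (1-t) \cdot 0 + \sum_{i=1}^{d} \lambda_i \ket{u_i}\bra{u_i}.
\]
Since $0 = \hat{g}_{\theta}((I))$ and each $\ket{u_i}\bra{u_i} = \hat{g}_{\theta}(M^{(u_i)})$ lies in $\hat{g}_{\theta}(\M(\H))$, convexity yields $G \in \hat{g}_{\theta}(\M(\H))$, completing the proof.

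The main subtlety is the case $\Tr G < 1$. Rank-one projections alone generate only the trace-one slice $\{\Tr G = 1\}$ inside $M^{+}(d,\R)$, so some extra element is needed to reach the interior $\{\Tr G < 1\}$. The crucial observation is that the trivial POVM $(I)$ is a legitimate member of $\M(\H)$ and contributes the zero matrix; this single additional vertex is exactly what is required to sweep out the full closed region $\{\Tr G \leq 1\}$ by convex combination.
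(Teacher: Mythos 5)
Your proposal is correct and follows essentially the same route as the paper: rank-one projections from Lemma \ref{lem:vv}, the zero matrix from the trivial POVM $(I)$, and convexity from Lemma \ref{lem:Convexity} for the reverse inclusion, with Lemma \ref{lem:Trg} handling the forward inclusion. Your treatment is slightly more explicit than the paper's (spelling out the spectral decomposition and the realness/positivity of $\hat{g}_{\theta}(M)$, which the paper leaves implicit), but the underlying argument is the same.
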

\begin{proof}
According to Lemma \ref{lem:vv}, for any $\boldsymbol{v}=(v_{1},\dots,v_{d})^{t}\in\mathbb{R}^{d}$ such that $\left|\boldsymbol{v}\right|=1$,
\[
\ket{v}\bra{v} \in \hat{g}_{\theta}(\M(\H)).
\]
We further observe that $0 \in \hat{g}_{\theta}(\M(\H))$ because the POVM $M^{(0)}:=(I)$ provides no information. 
Then we see from Lemma \ref{lem:Convexity} that 
\[
\hat{g}_{\theta}(\M(\H)) 
\supset co(\left\{ \ket{v}\bra{v} \biggm| \, v\in\mathbb{R}^{d},\left|\boldsymbol{v}\right|=1\right\} \cup\{0\})
=\left\{ G\in\realmatrix|\,\Tr G\leq1\right\}.
\]
The converse inclusion follows from Lemma \ref{lem:Trg}.
\end{proof}

\begin{lem}\label{lem:fisherset}
\[
g_{\theta}(\M(\H))=\left\{ \sqrt{J_{\theta}}G\sqrt{J_{\theta}} \biggm| G\in\realmatrix,\Tr G\leq1\right\} .
\]
\end{lem}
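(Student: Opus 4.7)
The plan is to relate $g_{\theta}(M)$ to $\hat{g}_{\theta}(M)$, whose image was already characterized as $\{G \in M^{+}(d,\R) \mid \Tr G \leq 1\}$ in Lemma~\ref{pro:seteq}, by exploiting the linear change of variables implicit in the definition $\hat{L}^{i} = \sum_{k} K^{ik} L_{k}$ with $K = U^{-1}\sqrt{J_{\theta}^{-1}}$. Everything then reduces to tracking this substitution through the formula for the classical Fisher information.

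First, I would substitute the definition of $\hat{L}^{i}$ directly into the entries of $\hat{g}_{\theta}(M)$. Because $\inner{\cdot}{\cdot}_{\theta}$ is bilinear in its arguments and the $K^{ik}$ are scalars, they can be pulled outside the sum over outcomes, giving the matrix identity
\[
\hat{g}_{\theta}(M) = K\, g_{\theta}(M)\, K^{T}.
\]
Using that $U \in O(d)$ (so $(U^{-1})^{T} = U$) and that $\sqrt{J_{\theta}^{-1}}$ is symmetric, this rewrites as
\[
\hat{g}_{\theta}(M) = U^{-1}\sqrt{J_{\theta}^{-1}}\, g_{\theta}(M)\, \sqrt{J_{\theta}^{-1}}\, U.
\]

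Next, since $J_{\theta} > 0$, the matrix $\sqrt{J_{\theta}^{-1}}$ is invertible with inverse $\sqrt{J_{\theta}}$, and $U$ is orthogonal, so the relation inverts to
\[
g_{\theta}(M) = \sqrt{J_{\theta}}\, U\, \hat{g}_{\theta}(M)\, U^{-1}\sqrt{J_{\theta}}.
\]
Invoking Lemma~\ref{pro:seteq} then yields
\[
g_{\theta}(\M(\H)) = \{\sqrt{J_{\theta}}\, U G U^{-1}\sqrt{J_{\theta}} \mid G \in M^{+}(d,\R),\ \Tr G \leq 1\}.
\]
To finish, I would observe that conjugation $G \mapsto U G U^{-1}$ by an orthogonal matrix is a bijection of $\{G \in M^{+}(d,\R) \mid \Tr G \leq 1\}$ onto itself, since it preserves both positive semidefiniteness and the trace. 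Relabelling $U G U^{-1}$ as $G$ on the right-hand side gives exactly the claimed characterization.

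No serious obstacle arises: once Lemma~\ref{pro:seteq} is available, the argument is just a change of basis. The only points requiring care are verifying that the scalars $K^{ik}$ really do pull out cleanly from the outcome sum (which they do by bilinearity) and that conjugation by $U$ is a symmetry of the trace-bounded positive cone, both of which are routine.
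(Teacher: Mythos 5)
Your proof is correct and follows essentially the same route as the paper: express $\hat{g}_{\theta}(M)=U^{-1}\sqrt{J_{\theta}^{-1}}\,g_{\theta}(M)\,\sqrt{J_{\theta}^{-1}}\,U$, invert the relation, and apply Lemma \ref{pro:seteq}. Your explicit remark that conjugation by $U$ is a bijection of the trace-bounded positive cone onto itself is a point the paper leaves implicit, but it is the same argument.
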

\begin{proof}
$$
\hat{g}_{\theta}(M)_{ij}=\sum_{st} K^{is} K^{jt} g_{\theta}(M)_{st} = \sum_{st} K^{is} g_{\theta}(M)_{st} (K^*)^{tj},
$$
thus
$$
\hat{g}_{\theta}(M)=U^{-1} \sqrt{J_{\theta}^{-1}} g_{\theta}(M) \sqrt{J_{\theta}^{-1}} U.
$$
Therefore
$$
\sqrt{J_{\theta}} U \hat{g}_{\theta}(M) U^{-1} \sqrt{J_{\theta}} = g_{\theta}(M).
$$
It follows from lemma \ref{pro:seteq} that
$$
g_{\theta}(\M(\H))
=\left\{ \sqrt{J_{\theta}} U G U^{-1} \sqrt{J_{\theta}} \biggm| G\in \hat{g}_{\theta}(\M(\H))   \right\}
=\left\{ \sqrt{J_{\theta}}G\sqrt{J_{\theta}} \biggm| G\in \realmatrix,\Tr G\leq1\right\} .
$$
\end{proof}
\begin{lem}\label{lem:Lagrange}
Given $S\in M^{+}(d,\R)$ such that $S>0$,
\[
\min\left\{ \Tr SG^{-1};G\in\realmatrix,\,\Tr G=1\right\} =(\Tr\sqrt{S})^{2}.
\]
Only if $G=\sqrt{S}/(\Tr\sqrt{S})$ then $\Tr SG^{-1}=(\Tr\sqrt{S})^{2}$.
\end{lem}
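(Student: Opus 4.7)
The plan is to derive the lower bound $\Tr S G^{-1} \geq (\Tr \sqrt{S})^{2}$ from a single application of the Cauchy--Schwarz inequality for the Hilbert--Schmidt inner product, with the constraint $\Tr G = 1$ absorbed directly into one of the factors, and then to read off the unique minimizer from the equality case.

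First I would observe that $G$ must be strictly positive definite for $\Tr S G^{-1}$ to be defined, so the positive square roots $G^{1/2}$ and $G^{-1/2}$ both exist; similarly $\sqrt{S}$ denotes the unique positive square root of $S > 0$. The key move is the symmetric factorization: setting $A := G^{1/2}$ and $B := G^{-1/2}\sqrt{S}$, I compute
\begin{align*}
\Tr A^{t} B &= \Tr G^{1/2} G^{-1/2} \sqrt{S} = \Tr \sqrt{S}, \\
\Tr A^{t} A &= \Tr G = 1, \\
\Tr B^{t} B &= \Tr \sqrt{S}\, G^{-1} \sqrt{S} = \Tr S G^{-1}.
\end{align*}
The Cauchy--Schwarz inequality $(\Tr A^{t} B)^{2} \leq (\Tr A^{t} A)(\Tr B^{t} B)$ then immediately yields $(\Tr \sqrt{S})^{2} \leq \Tr S G^{-1}$, which is the desired bound.

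For the equality part I would appeal to the standard Cauchy--Schwarz equality condition: equality forces $B = \lambda A$ for some $\lambda \in \R$, i.e.\ $G^{-1/2}\sqrt{S} = \lambda G^{1/2}$, which rearranges to $\sqrt{S} = \lambda G$. Taking traces and using $\Tr G = 1$ pins down $\lambda = \Tr \sqrt{S}$, so the only minimizer is $G = \sqrt{S}/\Tr \sqrt{S}$; a direct substitution confirms this choice attains the value $(\Tr \sqrt{S})^{2}$.

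No step poses a serious obstacle. The one thing requiring care is the symmetric split $\sqrt{S} = G^{1/2}\cdot G^{-1/2}\sqrt{S}$ that arranges for Cauchy--Schwarz to produce exactly $\Tr G$ and $\Tr S G^{-1}$ as the two Hilbert--Schmidt norms; once this factorization is in hand, both the bound and the uniqueness of the optimizer drop out at once, without any Lagrange-multiplier analysis or verification that a critical point is a minimum.
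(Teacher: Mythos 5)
Your proof is correct, and it takes a genuinely different route from the paper. The paper argues via a Lagrange multiplier: it sets $f(G)=\Tr SG^{-1}+\lambda(\Tr G-1)$, differentiates to get $G^{-1}SG^{-1}=\lambda I$, and solves for $G=\sqrt{S}/\Tr\sqrt{S}$. That argument only locates the critical point of the constrained problem; it does not verify that the critical point is a global minimum, nor does it address attainment (e.g.\ that $\Tr SG^{-1}$ blows up as $G$ degenerates toward the boundary of $\realmatrix$, which is what guarantees the infimum is achieved in the interior). Your Cauchy--Schwarz argument with the split $\sqrt{S}=G^{1/2}\cdot G^{-1/2}\sqrt{S}$ delivers the global inequality $(\Tr\sqrt{S})^{2}\le(\Tr G)(\Tr SG^{-1})$ for every positive definite $G$ in one stroke, and the equality case $G^{-1/2}\sqrt{S}=\lambda G^{1/2}$ gives uniqueness of the minimizer rigorously; your opening remark that singular $G$ can be discarded (since $S>0$ forces $\Tr SG^{-1}$ to be undefined, or $+\infty$ under the natural convention) closes the remaining loophole. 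So your proof is not only valid but strictly more complete than the one in the paper, at the cost of the one nonobvious idea --- choosing the right factorization --- which you identify explicitly.
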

\begin{proof}
For $G=\left(g_{ij}\right)_{1\leq i,j\leq d}$ , let $f(G):=\Tr(SG^{-1})+\lambda(\Tr G-1)$
where $\lambda$ is a Lagrange multiplier. Then\[
\frac{\partial f}{\partial G_{ij}}=\Tr\left[S(-G^{-1}\frac{\partial G}{\partial G_{ij}}G^{-1})\right]+\lambda\delta_{ij}=-\bra{e_{j}}G^{-1}SG^{-1}\ket{e_{i}}+\lambda\delta_{ij}=0\]
where $\left\{ e_{i}\right\} _{1\leq i\leq d}$ is the standard CONS
of $\R^{d}$. Thus
\begin{eqnarray*}
G^{-1}SG^{-1} & = & \lambda I
\end{eqnarray*}
from which\begin{eqnarray*}
G & = & \frac{\sqrt{S}}{\sqrt{\lambda}}\end{eqnarray*}
and\[
\lambda=\left(\Tr\sqrt{S}\right)^{2}\]
because of $\Tr G=1$. As a consequence\[
\min_{G}\Tr(SG^{-1})=\Tr(\lambda G)=\lambda=(\Tr\sqrt{S})^{2}.\]
\end{proof}

\begin{proof}[\bf{Proof of Proposition \ref{prop:qbitest}}]
According to Lemma \ref{lem:fisherset} and Lemma \ref{lem:Lagrange},

\begin{eqnarray*}
\min_{M\in\M(\H)}\Tr H_{\theta}\, g_{\theta}(M)^{-1} & = & \min\{\Tr H_{\theta}\sqrt{J_{\theta}^{-1}}G^{-1}\sqrt{J_{\theta}^{-1}}\,|\, G\in M^{+}(d,\mathbb{R}),\Tr G=1\}\\
 & = & \min\{\Tr\sqrt{J_{\theta}^{-1}}H_{\theta}\sqrt{J_{\theta}^{-1}}G^{-1}\,|\, G\in M^{+}(d,\mathbb{R}),\Tr G=1\}\\
 & = & \left(\Tr R_{\theta}\right)^{2}.\end{eqnarray*}
When $\Tr H_{\theta}\, g_{\theta}(M)^{-1}$ achieves the minimum,
$$G = \frac{R_{\theta}}{\Tr R_{\theta}}.$$
thus 
$$g_{\theta}(M) = \sqrt{J_{\theta}}G\sqrt{J_{\theta}} = \frac{\sqrt{J_{\theta}}R_{\theta}\sqrt{J_{\theta}}}{\Tr R_{\theta}}.$$
\end{proof}

\begin{proof}[\bf{Proof of Proposition \ref{prop:qbitEstM}}]
Assume that $d=3$. According to \eqref{eq:affine} and Lemma \ref{lem:vv},
\begin{eqnarray*}
g_{\theta}(M) & = & \sqrt{J_\theta} U \hat{g}_{\theta}(M) U^{-1} \sqrt{J_\theta}
=\sqrt{J_\theta} U \{p_{1} \hat{g}_{\theta}(M^{(1)})+p_{2} \hat{g}_{\theta}(M^{(2)})+p_{3} \hat{g}_{\theta}(M^{(3)})\} U^{-1} \sqrt{J_\theta}\\
 & = & \sqrt{J_\theta} U \{p_{1}\left(\begin{array}{ccc}
1 & 0 & 0\\
0 & 0 & 0\\
0 & 0 & 0\end{array}\right)+p_{2}\left(\begin{array}{ccc}
0 & 0 & 0\\
0 & 1 & 0\\
0 & 0 & 0\end{array}\right)+p_{3}\left(\begin{array}{ccc}
0 & 0 & 0\\
0 & 0 & 0\\
0 & 0 & 1\end{array}\right)\} U^{-1} \sqrt{J_\theta}\\
 & = & \sqrt{J_\theta} U \frac{S}{\Tr S} U^{-1} \sqrt{J_\theta}
=\sqrt{J_{\theta}}\frac{USU^{-1}}{\Tr S}\sqrt{J_{\theta}}=\frac{\sqrt{J_{\theta}}R_{\theta}\sqrt{J_{\theta}}}{\Tr R_{\theta}}.
\end{eqnarray*}
When $d=1$ or $2$, we can prove this in a similar way.
\end{proof}

\section{Rotationally symmetric weight \label{sec:RotationallySymmetricWeight}}
In this appendix, we show that 
any rotationally symmetric weight  
is represented in the form
\begin{equation}
H_x:=f(r) I + (g(r)-f(r)) \frac{1}{r^2} \ket{x} \bra{x},\label{eq:roWeight2}
\end{equation}
for $x \neq 0$ where $f,g$ are functions on $(0,1)$ such that $f(r)>0$ and $g(r)>0$. 

Given $x \in \X$ ($x \neq 0$) arbitrarily, 
let $e_1,e_2,e_3$ be an orthonormal basis of $\R^3$ with $e_3=\frac{\ket{x}}{|x|}$, 
and let $V \in SO(3)$ be any rotation about $e_3$-axis. 
Since
\begin{equation}
V^{*} H_{x} V = V^{*} H_{(V x)} V = H_x, \label{eq:roEq1}
\end{equation}
$H_x$ and $V$ are simultaneously diagonalized, and $e_3$ is one of their common eigenvectors. 
Other eigenvalues of $H_x$ must be degenerate because $V$ is any rotation about $e_3$-axis. 
Then $H_x$ should be represented as
\begin{eqnarray}
H_x &=& \hat{f}(x) \ket{e_1}\bra{e_1} + \hat{f}(x) \ket{e_2}\bra{e_2} + \hat{g}(x) \ket{e_3}\bra{e_3} \nonumber \\
       &=& \hat{f}(x) I + (\hat{g}(x)-\hat{f}(x)) \frac{1}{r^2}\ket{x} \bra{x} \label{eq:roHikaku1}.
\end{eqnarray}
Let $U \in SO(3)$ be any rotation. It follows that
\begin{eqnarray}
U^{*} H_{(U x)} U &=& U^{*} \left[ \hat{f}(U x) I + (\hat{g}(U x)-\hat{f}(U x)) \frac{1}{r^2}\ket{U x} \bra{U x} \right] U \nonumber \\
&=& \hat{f}(U x) I + (\hat{g}(U x)-\hat{f}(U x)) \frac{1}{r^2}  \ket{U^* U x} \bra{U^* U x} \nonumber \\
&=& \hat{f}(U x) I + (\hat{g}(U x)-\hat{f}(U x)) \frac{1}{r^2}  \ket{x} \bra{x} \label{eq:roHikaku2}.
\end{eqnarray}
We see that it follows $\hat{f}(x)=\hat{f}(U x)$ and $\hat{g}(x)=\hat{g}(U x)$ for any $U \in SO(3)$
by comparing \eqref{eq:roHikaku1} and \eqref{eq:roHikaku2}. 
Therefore $\hat{f}$ and $\hat{g}$ must be represented by $\hat{f}(x)=f(|x|)$ and $\hat{g}(x)=g(|x|)$.

\section{Bures distance and quantum Fisher information matrix \label{sec:Bures}}
The Bures distance between two states $\rho$ and $\sigma$ is 
defined by
$$B(\rho,\sigma):=4\left( 1-\Tr \sqrt{\sqrt{\rho}\sigma\sqrt{\rho}} \right).$$ 
It is known that 
\begin{equation}
B(\tau_x,\tau_{x+dx})=\frac{1}{2} \sum_{ij} J_{x,ij} dx^i dx^j + O(|dx|^3) \label{eq:nearBures} 
\end{equation}
when $|dx|$ is sufficiently small. 
Given an estimator $(M,\hat{x})$ that is locally unbiased at $x_0 \in \X$, 
it follows from \eqref{eq:nearBures} that
\begin{eqnarray*}
\Tr J_{x_0} V_{x_0}[M,\hat{x}] &=& E_{x_0}[M,\sum_{ij} J_{x_0,ij} (\hat{x}^i - x_0^i) (\hat{x}^j - x_0^j)]\\
&=& E_{x_0}[M,2 B(\tau_{x_0},\tau_{\hat{x}}) + O(|\hat{x}-x_0|^3)]. 
\end{eqnarray*}

\end{document}